\documentclass[12pt, draftclsnofoot, onecolumn]{IEEEtran}
\usepackage{etoolbox}
\usepackage{amssymb, cite}
\usepackage{bm,bbm}
\usepackage{mathtools}
\usepackage{multirow,color,amsfonts}
\usepackage{tabulary}
\usepackage{subfigure}
\usepackage{graphicx}
\usepackage{setspace}
\usepackage{enumerate}
\usepackage[]{algorithm2e}
\usepackage{comment}
\usepackage[utf8]{inputenc} 
\usepackage[T1]{fontenc}
\usepackage{url}
\usepackage{ifthen}
\usepackage{cite}
\usepackage{amsthm}
\usepackage{caption}
\usepackage{enumitem}

\newtheorem{theo}{Theorem}
\newtheorem{defi}{Definition}

\newtheorem{prop}{Proposition}
\newtheorem{remark}{Remark}

\newtheorem{ex}{Example}

\DeclareMathOperator\erfc{erfc}

\newcommand{\Kgkw}{K}%{K_{X_1,X_2}}

\newcommand\norm[1]{\left\lVert#1\right\rVert}
\newcommand\normtwo[1]{{\left\lVert#1\right\rVert}_2}

\newcommand\independent{\protect\mathpalette{\protect\independenT}{\perp}}
\newcommand*\xor{\mathbin{\oplus}}
\def\independenT#1#2{\mathrel{\rlap{$#1#2$}\mkern2mu{#1#2}}}

% Reduces space around equations and figures 
\setlength\abovedisplayskip{3pt plus 2pt minus 2pt} 	% Reduce space before equation
\setlength\belowdisplayskip{3pt plus 2pt minus 2pt}	% Reduce space after equation
\setlength\textfloatsep{10pt plus 2pt minus 2pt}		% Reduce space between figure caption and text

%\interdisplaylinepenalty=2500 

\hyphenation{op-tical net-works semi-conduc-tor}

%\doublespacing
\IEEEoverridecommandlockouts
%\IEEEaftertitletext{\vspace{-2.5\baselineskip}}
\begin{document}

\title{A Distributed Computationally Aware Quantizer Design via Hyper Binning} 

\author{Derya~Malak and Muriel M\'{e}dard \thanks{D. Malak is with the Communication Systems Dept., EURECOM, Biot Sophia Antipolis, FRANCE (derya.malak@eurecom.fr).} 
\thanks{M. M\'{e}dard is with RLE, MIT, Cambridge, MA, USA (medard@mit.edu).}
\thanks{An early version of the paper appeared in Proc. IEEE SPAWC 2020 \cite{malakmedard2020}. \hfill Manuscript last revised: {\today}.} }

\maketitle

\begin{abstract}
We design a distributed function-aware quantization scheme for distributed functional compression. 
We consider $2$ correlated sources $X_1$ and $X_2$ and a destination that seeks an estimate $\hat{f}$ for the outcome of a continuous function $f(X_1,\,X_2)$. We develop a compression scheme called hyper binning in order to quantize $f$ via minimizing the entropy of joint source partitioning.  Hyper binning is a natural generalization of Cover's random code construction for the asymptotically optimal Slepian-Wolf encoding scheme that makes use of orthogonal binning. The key idea behind this approach is to use linear discriminant analysis in order to characterize different source feature combinations. This scheme captures the correlation between the sources and the function's structure as a means of dimensionality reduction. We investigate the performance of hyper binning for different source distributions and identify which classes of sources entail more partitioning to achieve better function approximation.
Our approach brings an information theory perspective to the traditional vector quantization technique from signal processing.
\end{abstract}
\begin{IEEEkeywords}
Function-aware quantization, function coding, computation, hyper binning, orthogonal binning.
\end{IEEEkeywords}

%%%%%%%%%%%%%%%%%%%%%%%%%%%%%%%%%%%%%%%%%%%%%%%%%%%%%%%%%%%%%%%%%%%%%%%%%%%%%%%%%%%%%%%%%%%%%%%%%%%%%%%%%%%%%%%%%%%%%%%%%%%%%%%%%%
\section{Introduction}
\label{intro}

Compression and processing of large amount of data is a challenge in various applications. From an information theory perspective, there are asymptotic optimal approaches to the distributed source compression problem that can achieve arbitrarily small decoding error probability for 
large blocklengths, such as noiseless distributed coding of correlated sources as proposed by Slepian-Wolf \cite{slepian1973noiseless}, and their extensions \cite{wyner1976rate,PradRam2003,coleman2006low}, which are based on orthogonal binning of typical sequences. Practical Slepian-Wolf encoding schemes include coset codes \cite{PradRam2003}, trellis codes \cite{wang2001design}, 
and turbo codes \cite{bajcsy2001coding}. Other examples include rate region characterization using 
a graph-based approach, 
such as a graph-based approach, 
such as \cite{korner1973coding,AlonOrlit1996,OrlRoc2001,DosShaMedEff2010,feizi2014network,FES04,Gal88}, and coding for computation with communication constraints \cite{LiAliYuAves2018,%kamran2019deco,
YuAliAves2018}. While some approaches focus on network coding for computing linear functions, such as \cite{KowKum2012,HuanTanYangGua2018,AppusFran2014,koetter2003algebraic,LiYeuCai2003,HoMedKoeKarEffShiLeo2006}, there exist works exploiting functions with special structures, e.g., in \cite{SheSutTri2018} as well as coding %compression 
of sparse graphical data, e.g.,  
\cite{delgosha2019notion} and  \cite{delgosha2018distributed}.

The related work in the signal processing domain includes vector quantization and distributed estimation-based models. A vector quantization technique was proposed in \cite{padmanabhan1999partitioning}, where the feature space is partitioned via a hierarchical tree-based classifier such that the average entropy of the class distribution in the partitioned regions is minimized. In \cite{ribeiro2006bandwidthI}, conditions for efficiently quantizing scalar parameters were characterized and %MLE 
estimators that require transmitting just one bit per source that exhibits variance almost equal to the minimum variance estimator based on unquantized observations were proposed. Max-Lloyd algorithm, which is a Voronoi iteration method, was applied to vector quantization and pulse-code modulation \cite{max1960quantizing}. 
Vector quantization using linear hyperplanes was applied to distributed estimation in sensor networks in the presence of noise \cite{fang2009hyperplane}, and with resource constraints \cite{ribeiro2006bandwidthII}. In addition to the quantization-based approaches, the problem of detection and hypothesis testing have drawn significant attention, see the schemes, e.g., a mismatched detector for channel coding and hypothesis testing \cite{abbe2007finding}, or signal constellation design with maximal error exponent \cite{huang2004error}. There has recently been quite interesting work in traditional signal processing that minimizes some distortion measure from the quantized measurements, e.g., hardware-limited quantization for achieving the minimum mean-squared error (MMSE) distortion \cite{shlezinger2019hardware}, 
task-based quantization for recovering functions with special structures, e.g., quadratic functions as in \cite{salamatian2019task}, and sparse functions \cite{cohen2019serial2}.

Another perspective on efficient representation is coding for functional compression, which is complementary to the vector quantization methods. In \cite{basu2020hypergraph}, the authors have proposed a hypergraph-based coloring scheme whose rate lies between the Berger-Tung inner and outer bound and showed that for independent sources, their scheme is optimal for general functions. In \cite{servetto2005achievable}, the author has derived inner and outer bounds for multiterminal source coding. The author has shown that for scalar codes (scalar quantizers followed by block entropy coders) the two bounds converge.  
In \cite{misra2011distributed}, the authors have considered the distributed functional source coding problem, in which the sink node computes an estimate of the function $g(X_1,\dots, X_s)$ under MSE distortion. The setting is restricted to the communication of source data over rate-limited links, and scalar quantization of each $X_i$ for $i=1,\dots,s$ using a sequence of companding quantizers $\{Q_K^i\}$ of increasing resolution $K$, mostly for independent sources $\{X_i\}_{i=1}^s$.  
Unlike \cite{misra2011distributed}, we consider vector quantization without the assumptions on the source independence or the rate-limited links. 

\begin{table*}[t!]\small
\captionsetup{font=small}
\setlength{\extrarowheight}{0.8pt}
\begin{center}
\begin{tabular}{l | l | l}
{\bf Problem types} & {\bf Side information} & {\bf Distributed source coding for computing}\\ 
\hline
$f(X_1,\,X_2)=(X_1,\,X_2)$ & Wyner and Ziv \cite{wyner1976rate} & Coleman et al. \cite{coleman2006low}, 
\\
& & Berger et al. \cite{berger1979upper},
\\
& & Barros and Servetto \cite{barros2003rate}, 
\\
& & Wagner et al. \cite{wagner2008rate}\\
\hline
Coding for computing general $f(X_1,\,X_2)$ & Yamamoto \cite{yamamoto1982wyner},  
& Feizi and M\'edard \cite{feizi2014network},\\ 
& Feng et al. \cite{FES04}, & Basu et al. \cite{basu2020hypergraph}\\
& Doshi et al. \cite{DosShaMedEff2010},  & \\ 
& Basu et al. \cite{basu2020functional} &\\
\hline
Product of two broadcast channels & Watanabe \cite{watanabe2013rate} & \\ 
Multiple access channel (MAC) & Rajesh et al. \cite{rajesh2008distributed} & Nazer and Gastpar \cite{nazer2007computation} 
\\
Two-hop and diamond networks & & Guo \cite{gu2009achievable} 
\end{tabular}
\end{center}
\caption{Research progress on nonzero-distortion source coding problems.}
\label{table:research}
\end{table*}

In \cite{linder1999high}, the authors have considered high-resolution source coding with multidimensional companding for non-difference distortion measures. In \cite{bucklew1984multidimensional}, the author has minimized the MSE for the Wyner-Ziv problem with decoder side information and functional distortion. In \cite{thao1996lower}, the authors have used a structured hyperplane wave partition model using a frame model -- a redundant set of basis vectors -- that provides $O(1/R^2)$ MSE distortion as a function of the redundancy $R$ \cite{thao1996lower}, and the follow-on works such as \cite{goyal1998quantized} have focused on deterministic qualities of quantization, and \cite{goyal2001quantized}, which concerns applying frames to a packet erasure network. This model, similar to network coding \cite{koetter2003algebraic}, serves for recovering the DoFs more effectively. Different from \cite{linder1999high,bucklew1984multidimensional,thao1996lower,goyal1998quantized,goyal2001quantized}, we assume a randomized model where reconstruction is not consistent.

The broad and common objective in these models is finding ways of effective compression and communication of massive data. This goal is realizable by capturing underlying redundancy both in data and functions, and recovering a sparse representation, or labeling, at the destination. From a practical perspective, the redundancy across geographically dispersed sources' data plays a big role and can provide significant gains in compression. Hence, from a technical point of view, compressing data is preferred for reducing resource consumption in networks (e.g., wireless or data center networks).  
Furthermore, there might be privacy concerns at the source sites because sources may not be willing to share  
sensitive data, including customer data or medical records. Additionally, the destination might only be interested in a function of the data and cannot store the entire data. In this scenario, the sources aim to collectively determine a function outcome without disclosing their data to each other.  
Hence, the distributed computation of functions naturally fits into the distributed source compression framework, ensuring the protection of sources.

We summarize the efforts on nonzero-distortion source encoding problems in Table \ref{table:research}. 
Despite these approaches, the exact achievable rate region for the function compression problem is, in general, an open problem. To the best of our knowledge, it is only solved for special scenarios, including general tree networks \cite{feizi2014network}, linear functions \cite{koetter2003algebraic}, 
identity function \cite{slepian1973noiseless}, and rate-distortion characterization with decoder side information \cite{wyner1976rate}. However, there do not exist tractable approaches that approximate the information-theoretic limits to perform functional compression in general topologies. Thus, unlike compression, for which coding techniques exist, and compressed sensing acts in effect as an alternative for coding, for purposes of simplicity and robustness, there is currently no family of coding techniques for functional compression.

%%%%
Our main contributions are summarized as follows:
\begin{itemize}[leftmargin=*]
\item A novel approach, called \emph{hyper binning}, for distributed function-aware quantization that uses hyperplane arrangements (Sect. \ref{what_binning}). It provides a vector quantized functional representation of distributed sources that minimizes the entropy of joint source partitioning (Sect. \ref{problem}).
\item Application of hyper binning to sources modeled as a Gaussian mixture model (GMM) (Sect. \ref{HyperBinDesign}), as a special tractable case of the problem. To demonstrate the gains of hyper binning, we also consider more general continuous and discrete-valued sources (Sects. \ref{binning_distributed_source_coding}, \ref{comparison_existing_work}, and \ref{discussion}).  
\item The theoretical justification for the rate-distortion performance of \emph{hyper binning}, for distortion criteria including a) entropy-based, b) mean-squared error (MSE), or c) Hamming distortion and d) Gaussian approximation. (Sect. \ref{distortion}, where we provide the rate-distortion expressions for the general case and the case of the GMM).
\item A scaling between the number of hyperplanes $J$ and the blocklength $n$ that {\em hyper binning} can support (Sect. \ref{convex_background}).  
\item Characterization of the description length of hyper binning at finite blocklengths via %using the notion of 
Kolmogorov complexity (Sect. \ref{compression_finite_blocklength}).
\item A comparison of hyper binning for real-valued source data with coloring-based modular compression schemes that decouple quantization and binning (orthogonal binning, e.g., Slepian-Wolf coding \cite{slepian1973noiseless}, or codebook trimming \cite{feizi2014network} [Sect. \ref{binning_distributed_source_coding}]) for discrete-valued data (Sect. \ref{comparison_existing_work}).
\item An encoding heuristic for hyper binning by exploiting the G{\'a}cs-K{\"o}rner common information (GK-CI) (Sect. \ref{discussion}).
\end{itemize} 
Via the proposed hyper binning scheme, we aim to address the following central questions:
\begin{itemize}
    \item What functions $f$ can we approximate well? Via hyper binning, the class of functions $f$ we can compute (with zero error) is the class of $f$ given by a hyperplane arrangement. Our approach can be used to well approximate discrete, piecewise constant, or linearly separable functions. In addition, hyper binning can model $f$ that are continuous in the neighborhood of the quantization levels.  
    For other classes of $f$, the approximation depends on the distortion metric and criterion and the number of hyperplanes in general position (GP), which divides the space into a maximum number of regions \cite{chataignon2019comparison}.
    \item How should we choose the hyperplane parameters given a function $f$? 
    This choice depends on the distortion criterion. For instance, in Sect. \ref{HyperBinDesign}, we detail source data satisfying a Gaussian mixture model subject to a given LDA classification error criterion. However, for general source distribution models subject to different distortion metrics, $f$ significantly impacts the design. To that end, we consider various examples in Sect. \ref{comparison_existing_work}.
    \item How many hyperplanes $J$ do we require at finite blocklengths? How does $J$ scale as $n\to\infty$? 
    The maximum blocklength that can be supported with $J$ hyperplanes in GP is  $n_{\max}=\frac{J}{2}+O\left(\frac{1}{J}\right)$, i.e., $J\approx 2n_{\max}$ as $n\to\infty$.  
\end{itemize} 

\begin{figure*}[h!]
\centering
\includegraphics[width=0.85\textwidth]{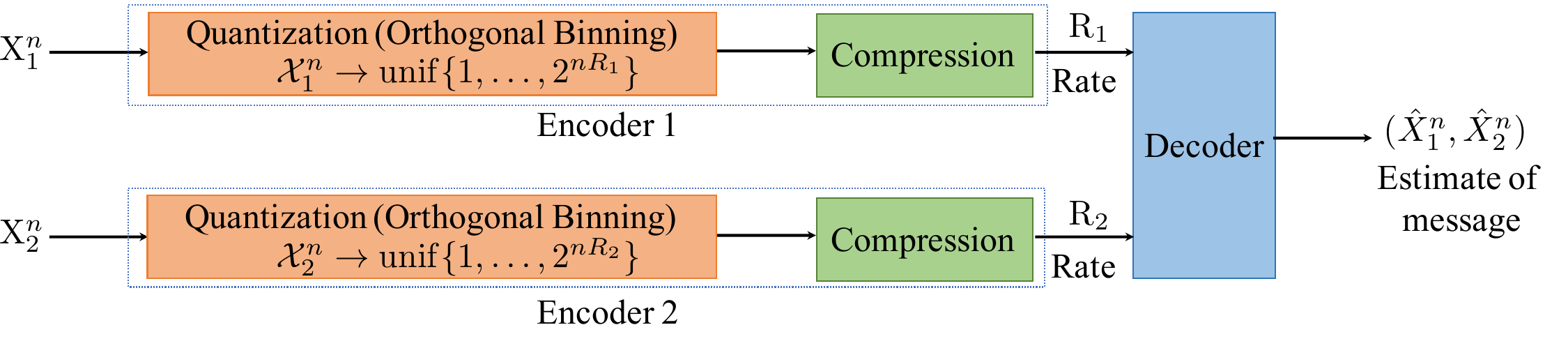}
\caption{\small{Distributed compression scheme of Slepian-Wolf \cite{slepian1973noiseless} via binning constructed using the asymptotically optimal approach in \cite{cover2012elements}. }}\label{fig:SW}
\end{figure*}

\begin{figure*}[h!]
\centering
\includegraphics[width=0.85\textwidth]{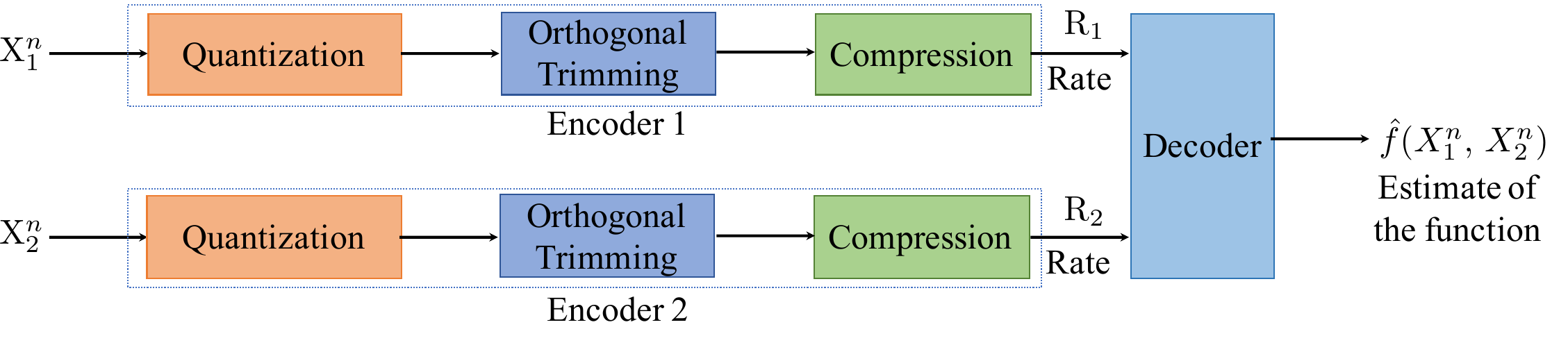}
\caption{\small{Orthogonal trimming of the random binning-based  
codebook, where the bins of $X_1^n$ and $X_2^n$ are independently trimmed.}}\label{fig:orthogonal_binning}
\end{figure*}

{\bf Connections to the State-of-the-Art.} The novelty of {\em hyper binning} is that we find a partitioning of the sources using a hyperplane arrangement to allow describing a function of interest up to some quantization distortion rather than determining an independently quantized representation of dispersed source data oblivious to the function. Such a scheme needs fewer dimensions than the codeword size and captures the function's dependence on the data. 
The technique differs from traditional vector quantization for data compression and brings together techniques from information theory, such as distributed source encoding, functional compression, and optimization of mutual information, to the area of signal processing via function quantization inspired by hyperplane-based vector quantizers.  
Hyper binning does not rely on the NP-hard nature of graph coloring \cite{feizi2014network} and the asymptotically optimal information-theory-based models \cite{slepian1973noiseless}, \cite{wyner1976rate} which are impractical for finite blocklengths. Hyper binning is an intuitive generalization using linear hyperplanes for encoding continuous functions through a vector quantization of the high dimensional codebook space. 
Our results can be used to recover the instances, e.g., the Slepian-Wolf compression model or its orthogonal trimming.

{\bf Organization.} The paper's organization to answer the central questions outlined above is as follows. Sect. \ref{problem} states the problem of vector quantized functional representation of distributed sources and describes a linear hyperplane-based distributed function encoding approach called {\em hyper binning}. Sect. \ref{why_binning} provides the motivation behind, Sect. \ref{convex_background} details background on convex sets and hyperplanes, and Sect. \ref{NecessaryConditions} details the necessary conditions for encoding the functions.  
Sect. \ref{HyperBinDesign} focuses on the analytical details of hyper binning for encoding functions to determine the optimal hyperplane allocation for a specific instance where the source data is characterized by a Gaussian mixture model (GMM). More specifically, for the GMM, Sect. \ref{data} describes the data and hyperplane arrangement, Sect. \ref{HyperplaneDistribution} focuses on optimizing the arrangement to maximize a notion of the mutual information between the function and the partitions, Sect. \ref{source_data_distribution_symmetric_asymmetric} describes the behavior of the mutual information for different source data distribution models across the  
classes of the GMM and Sect. \ref{distortion} details several rate-distortion models (including the entropy-based, mean-squared, Hamming, and Gaussian distortion models) of hyper binning for characterizing the GMM. 
Sect. \ref{binning_distributed_source_coding} contrasts hyper binning and orthogonal binnings for infinite blocklengths, along with the assumptions on the sources, via building on the classical distributed encoding approach of \cite{slepian1973noiseless}. 
Sect. \ref{compression_finite_blocklength} is concerned with the compression complexity at finite blocklengths. 
To demonstrate the gains of hyper binning for more general source distributions, including continuous-valued sources, Sect. \ref{comparison_existing_work} provides a rate-region comparison of hyper binning and existing schemes on graph-based \cite{feizi2014network} and hypergraph-based \cite{basu2020hypergraph,basu2020functional} coloring schemes, both for pre and post-quantized source data. 
Sect. \ref{discussion} details a discussion on the connections between hyper binning and coloring-based coding models and a heuristic for encoding that relies on the G{\'a}cs-K{\"o}rner CI. Finally, Sect. \ref{conclusion} summarizes our contributions and points out future directions.

{\bf Notation.}
The binary entropy function, denoted $h(p)$, satisfies $h(p) = -p\log _{2}p-(1-p)\log _{2}(1-p)$. Given a discrete random variable $X$, $H(X)=\mathbb{E}[-\log_2(X)]$ is the entropy of $X$ in bits. Similarly, $H(X_1,X_2)$ is the joint entropy of $X_1$ and $X_2$, and $H(X_1|X_2)$ is entropy of $X_1$ conditioned on $X_2$. 

Let $C$ be a non-empty closed convex subset of $\mathbb{R}^n$, i.e., $C \subseteq\mathbb{R}^n$, and ${\bf x},\,{\bf z}$ be vectors in $\mathbb{R}^n$, and $\norm{\cdot}$ denote the Euclidean norm on $\mathbb{R}^n$. For $n \in N$, let $B^n= \{{\bf x} \in \mathbb{R}^n: \norm{{\bf x}} \leq 1\}$ be the unit ball, and $\nu_{n-1}$ denote the uniform distribution on the unit sphere $S^{n-1}= \{{\bf x} \in \mathbb{R}^n: \norm{{\bf x}} = 1\}$.

%%%%%%%%%%%%%%%%%%%%%%%%%%%%%%%%%%%%%%%%%%%%%%%%%%%%%%%%%%%%%%%%%%
\section{Problem Statement}
\label{problem} 
We consider a system with two encoders (the problem can be generalized to any number of encoders $s>2$) and a joint decoder. For a given blocklength $n$, two encoders observe random sequences ${\mathbf{X}_1^n} \in \mathcal{X}_1^n$ and ${\mathbf{X}_2^n} \in \mathcal{X}_2^n$ where the pairs $\{ (X_{1}(l), X_{2}(l)) \colon l = 1, \ldots, n\}$ are two statistically dependent and length $n$ sequences drawn 
independently and identically (i.i.d.) according to a known joint distribution $p_{X_1 ,X_2}(x_1, x_2)$, i.e., the sequences have a joint distribution that satisfies $\prod\nolimits_{l=1}^n p_{X_1 ,X_2}(x_1(l), x_2(l))$ for ${\mathbf{x}_i^n}\in\mathcal{X}_i^n$, $i\in\{1,2\}$.   
The decoder aims to recover a vector quantized functional representation of distributed sources. The source terminals must independently encode these observations into messages sent to a decoder who wishes to estimate the sequence $f({\mathbf{X}_1^n},\,{\mathbf{X}_2^n})=\{ f(X_{1}(l), X_{2}(l)) \colon l = 1, \ldots, n\}$ subject to distortion (which we detail in Sect. \ref{distortion}), where $f : \mathcal{X}_1 \times \mathcal{X}_2 \to \mathcal{Y}$ is a single-letter function that could be continuous or discrete. In particular, we are interested in the case where $\mathcal{X}_1 = \mathcal{X}_2 = \mathcal{Y} \subseteq \mathbb{R}$,  
i.e., $X_1$ and $X_2$ could have bounded support and $f$ could be defined on a bounded subset of $\mathbb{R}^2$. We assume that $f$ is known both at the sources and the decoder, and there is no feedback in the system.

\begin{figure*}[h!]
\centering
\includegraphics[width=0.85\textwidth]{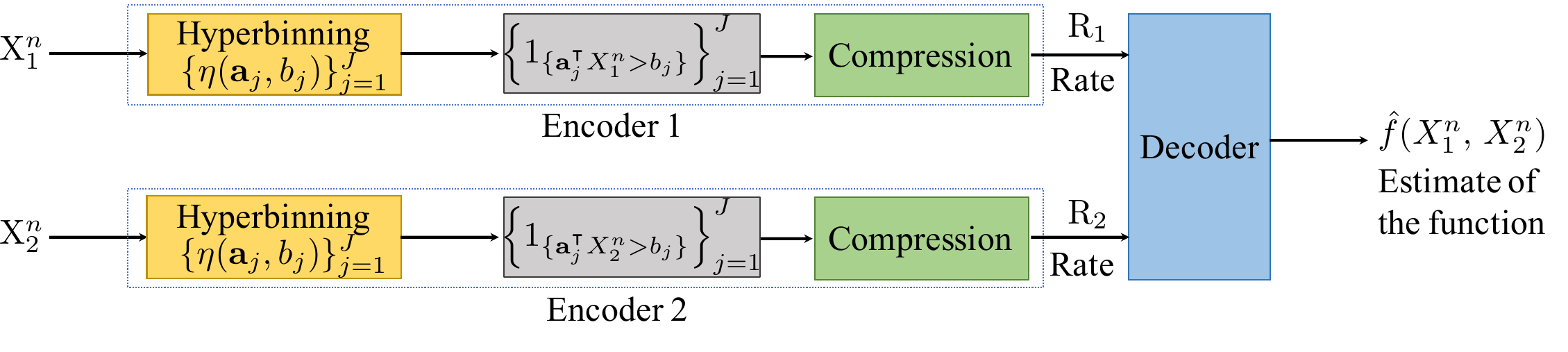}
\caption{\small{Hyper binning, which generalizes the orthogonal compression to convex regions determined by the intersections of hyperplanes.}}\label{fig:hyperbinning}
\end{figure*}

Some special cases of this distributed quantization problem have been considered in the literature, including the distributed encoding scenario studied by Slepian-Wolf in their landmark paper when $f$ is the identity function \cite{slepian1973noiseless}. Specifically, given sources $X_1$ and $X_2$ with finite alphabets, 
the Slepian-Wolf theorem gives a theoretical bound on the lossless compression rate for distributed coding of two statistically dependent and i.i.d. finite alphabet source sequences  \cite{slepian1973noiseless}. Indeed, Cover developed an asymptotically optimal encoding scheme using orthogonal binning \cite{cover2012elements}. The orthogonal binning is such that the codewords are selected uniformly at random from each bin, and the bins are equally likely. The functional extensions of \cite{slepian1973noiseless} in \cite{feizi2014network} and \cite{basu2020hypergraph} are also on the already post-quantized data streams.  
In this paper, we generalize the coding approach in \cite{slepian1973noiseless}. Instead of decoding the identity function, i.e., the sources $X_1$ and $X_2$ themselves, we recover a continuous function $f$ of $X_1$ and $X_2$ that satisfies the properties detailed in Sect. \ref{NecessaryConditions}.

A common assumption in the point-to-point model of Shannon \cite{shannon1948mathematical} or more general communication systems is that signal is discrete-time sequence $X(l)$, $l = 1, \ldots, n$. The goal is to design a distributed compression scheme that gives the best possible reconstruction for a given distortion criterion. 
In particular, a classical approach is scalar quantization of source data samples, which turns the source data into a discrete memoryless sequence, followed by distributed compression, allowing the use of distributed source coding techniques, as shown in Fig. \ref{fig:SW}. Another approach generalizes the above quantization scheme to a compression model for estimating a class of functions that allows orthogonal trimming of codebooks, as shown in Fig. \ref{fig:orthogonal_binning}. In this approach, the random bins (uniformly quantized bins) generated by the coding theorem of Slepian-Wolf \cite{slepian1973noiseless} are trimmed orthogonally, i.e., the trimming of the sequences $X_1^n$ and $X_2^n$ is independently performed. While the theorem of Slepian-Wolf is originally for discrete variables, the rate-distortion function of Wyner–Ziv coding is known for both discrete and continuous alphabet cases of the source and the side information with a general distortion metric in \cite{liu2006slepian}, \cite{wyner1976rate}, \cite{wyner1978rate}. The designs in Figs. \ref{fig:SW} and \ref{fig:orthogonal_binning} are optimal only for a set of functions (piecewise constant or block). However, the separation-based approach (which first quantizes and then compresses the data) may be suboptimal. By contrast, a strategy that employs compression on the functional representation of the vector quantized data can outperform  separation. In this paper, we go beyond the classical compression algorithms that work on the post-quantized single-letter representation of data. 
We propose a novel linear hyperplane-based function encoding approach, called \emph{hyper binning}, that can operate on the pre-quantized data, using ideas from vector quantization to provide a more effective way of functional compression.

%%%%%%%%%%%%%%%%%%%%%
\subsection{What is Hyper Binning?}\label{what_binning}
Hyper binning relies on quantizing ${\mathbf{X}_1^n}$ and ${\mathbf{X}_2^n}$ using a collection of linear hyperplanes called a \emph{hyperplane arrangement}. A linear hyperplane is an $(n-1)$-dimensional subspace of an n-dimensional vector space and hence can be described with a linear equation of the following form:
\[
a_{1}x_{1}+a_{2}x_{2}+\cdots +a_{n}x_{n}=b.\ 
\]
The idea is to partition a high dimensional codebook space into closed convex regions called hyper bins that capture the correlations between $X_1$ and $X_2$ as well as the dependency between the function $f$ and $(X_1,X_2)$. The key intuition is that closed convex sets have dual representations as an intersection of half-spaces. For this purpose, we use a finite set of hyperplanes, and their crossings determine the hyper bins, i.e., the quantized outcomes of $f$. Via hyper binning, it is possible to represent $f$ accurately up to a distortion level. The quantization error can vanish by optimizing the number, parameters, and dimensions of the hyperplanes employed. To the best of our knowledge, hyper binning is a new functional viewpoint to the challenging problem of distributed function-aware quantization in computational information theory. 

We denote a hyperplane arrangement with cardinality $J$ by $\{\eta({\bf a}_j, b_j)\}_{j=1}^J$. The choice of the hyperplane parameters $\{{\bf a}_j\in\mathbb{R}^n,\, b_j\in\mathbb{R}: j=1,\dots J\}$ depends on the characteristics of the joint distribution of $X_1$ and $X_2$ and its relation with the function $f({\mathbf{X}_1^n},\,{\mathbf{X}_2^n})$ to be estimated, which is detailed in Sect. \ref{HyperBinDesign}. 
In our encoding approach, unlike the orthogonal binning and orthogonal trimming approaches, we determine the half-spaces determined by the arrangement, where a half-space corresponding to hyperplane $j$ is a set given by $\{{\bf x} \in \mathbb{R}^n \colon {\bf a}_j^{\intercal} {\bf x} > b_j\}$, and compress the intersection of half-spaces. We illustrate this novel approach in Fig. \ref{fig:hyperbinning}. 

The hyper binning-based encoding scheme is applicable under broad source distributions $p_{X_1 ,\dots,X_s}(x_1,\dots,x_s)$, given a number of sources $s$. In Sect. \ref{HyperBinDesign}, we use the GMM as a tractable instance under the general framework for hyper binning. 
We consider the case in which the encoders have the same parameters $\{{\bf a}_j,\,b_j\}_{j=1}^J$ motivated by using the G{\'a}cs-K{\"o}rner CI, where the CI rate is the rate of compressing the parameters $\{{\bf a}_j,\,b_j\}_{j=1}^J$ (as will be detailed in Sect. \ref{HyperBinning}). While the hyperplane parameters for the individual encoders need not be the same, the CI between the encoders is less for the case of different parameters versus the same parameters.

%%%%%%%%%%%%%%%%%%%%%
\subsection{Why Hyper Binning?}\label{why_binning}
Sending colorings of sufficiently large power graphs of characteristic graphs followed by source coding, e.g., Slepian-Wolf compression \cite{slepian1973noiseless}, leads to an achievable encoding for compressing functions provided that the functions satisfy some additional conditions \cite{feizi2014network}. 
Instead of sending source variables, it is optimal to send coloring variables that model a valid encoding of a characteristic graph that captures which source outcomes should be distinguished to recover the desired function \cite{OrlRoc2001}. The destination then uses a look-up table to compute the desired function value by using the received colorings. While in some cases, the coloring problem is not NP-hard, in general, finding this coloring is an NP-complete problem  \cite{cardinal2004minimum}. Interesting instances include set cover \cite{feige1998threshold}, 3-dimensional matching \cite{karp1972reducibility}, and 3-coloring \cite{eppstein2000improved}. 

As in Slepian-Wolf encoding, in hyper binning, each bin represents a typical sequence of function $f$'s outcomes and is a collection of infinite length sequences.  
Hyper binning does not rely on NP-hard concepts such as finding the minimum entropy coloring of the characteristic graph of $f$. 
Unlike graph coloring, hyper binning with a sufficient number of hyperplanes in GP jointly partitions the source random variables in a way to achieve the desired quantization error at the destination for a given computation task. 
Given an entropy-based distortion measure as in, e.g., \cite{courtade2011multiterminal}, we require the following condition on the number of hyperplanes $J$: $R_1+R_2=\sum\nolimits_{j=1}^{J} h(q_j)\geq 1-\epsilon$
\begin{align}
J = \min_{k} \Big\{k\,: \,  \sum\limits_{j=k+1}^{2n} h(q_j)\leq\epsilon\Big\}.  \nonumber
\end{align}
Hyper binning naturally allows (a conditionally) independent encoding across the sources via an ordering of hyperplanes at each source prior to transmission and their joint decoding at the destination. This is possible with a helper mechanism that ensures the communication of the common randomness, through extracting the G{\'a}cs-K{\"o}rner common information (GK-CI) \cite{gacs1973common}, characterized via hyperplanes.   
The GK-CI variable is 
the maximum common
information that can be extracted from each source. We detail this measure in Sect. \ref{HyperBinning}.

%%%%%%%%%%%%%%%%%%%%%%%%%%%%%%%%%%%%%%%%%%%%%%%%%%%%%%%%%%%%%%%%%%
\subsection{Technical Background}
\label{convex_background}

The next remark provides the necessary and sufficient condition for a set to be convex. It also imposes a necessary condition on the function $f$ we represent via partitioning.

\begin{remark}\label{convex_iff}
A set $C$ is convex if and only if for any random variable $X$ (or function) over $C$, $\mathbb{P}(X \in C) = 1$, its expectation is also in $C$, i.e., $\mathbb{E}[X] \in C$ \cite{dattorro2010convex}. 
\end{remark}

If $C = \{P_C\,{\bf x}\, \vert\, {\bf x} \in \mathbb{R}^n\}$, then for each ${\bf x}\in\mathbb{R}^n$ there exists a unique point $P_C\,{\bf x}\in C$ %belonging to $C$ 
that is closest to ${\bf x}$ in the Euclidean sense. Unique projection of ${\bf x}$ onto $C$ \cite[Ch. E.9]{dattorro2010convex} equals
\[
P_C\,{\bf x}  = \arg\min\limits_{{\bf y}\in \mathcal{C}} \normtwo{ {\bf x} - {\bf y} }.
\]

Every linear hyperplane $\eta$ is an affine set parallel to an $(n-1)$-dimensional subspace of $\mathbb{R}^n$ \cite{dattorro2010convex}. Let $\mathcal{H}^n$ be the space of hyperplanes in $\mathbb{R}^n$. A hyperplane $\eta \in \mathcal{H}^n \subset \mathbb{R}^n$ is characterized by the linear relationship given as follows: 
\begin{align}
\label{HyperplaneEquation}
    \eta({\bf a}, b) = \{{\bf y} \in \mathbb{R}^n: {\bf a}^{\intercal}{\bf y} = b\},\,\, {\bf a} \in S^{n-1},\,\, b \in \mathbb{R},
\end{align}
where ${\bf a}$ is the nonzero normal and $S^{n-1}$ is the unit sphere.

Projection of ${\bf x}\in\mathbb{R}^n$ onto $\eta$ \cite[Ch. E.5]{dattorro2010convex} is given as
\[
P{\bf x} = \arg\min\limits_{{\bf y}\in H} \normtwo{ {\bf x} - {\bf y} }\\
= {\bf x} - {\bf a}({\bf a}^{\intercal}{\bf a})^{-1}({\bf a}^{\intercal}{\bf x}-b).
\]

We shall let $s$ and $J$ denote the number of sources and hyperplanes, respectively. 
A hyperplane arrangement of size $J$ in an $s$ dimensional source space creates at most $r(s,J)=\sum\nolimits_{k=0}^s \binom{J}{k}\leq 2^J$ regions. Hyperplanes in general position (GP) divide the space to $r(s,J)$ regions \cite{chataignon2019comparison}. In this paper, for the sake of presentation, we study the case $s=2$. We represent the source data by feature vectors $\{{\bf x}_t\}\in \mathbb{R}^n$ that are mixtures of Gaussian variables and sampled from a training set where $t$ denotes the index of ${\bf x}_t$, which we detail in Sect. \ref{HyperBinDesign}.

\begin{ex}\label{GP}
A hyperplane arrangement of size $J=3$ for $n=2$ in GP divides the space into $r(2,3)=7$ regions. 
\end{ex}

For each hyperplane $\eta({\bf a}, b)$ there are $n+1$ unknowns ${\bf a},\, b$  to be determined, hence there are $(n+1)J$ unknown hyperplane parameters in total. A given number of hyperplanes $J$ in GP can support a feature vector in an $n$-dimensional space where the dimension is upper bounded as
\begin{align}
\label{nmax}
   n_{\max}=\max\limits_{n\geq 1}\,\,[n \,\vert\, (n+1)J \leq r(s,J)]. 
\end{align}
\begin{figure}[t!]
\centering
\includegraphics[width=0.5\textwidth]{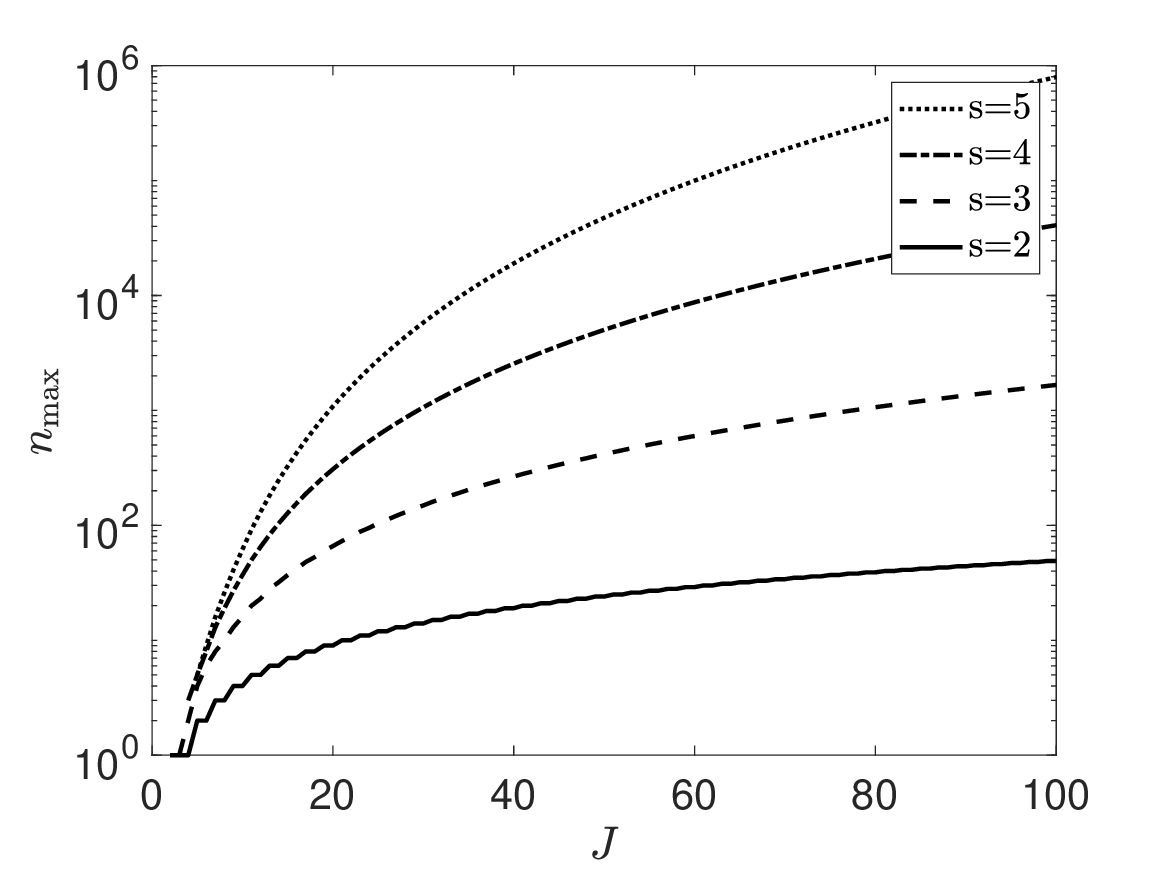}
\caption{\small{Maximum $n$ vs $J$ in GP for different number of sources $s$.}}\label{fig:nvsJ}
\end{figure}

This inequality is because the required number of hyperplane parameters, $(n+1)J$ unknowns, should be smaller than the number of regions denoting the quantized function outcomes where each output is an equation represented by intersections of half-spaces. This result gives a necessary condition (a lower bound on $J$) to support a feature vector in $\mathbb{R}^n$. In Fig. \ref{fig:nvsJ} we sketch the relation between $n_{\max}$ and $J$.  
The number of dimensions a hyperplane arrangement can capture scales exponentially in the number of sources $s$ (for $s>2$) vs orthogonal binning that provides linear scaling of the total number of dimensions in $s$ as $J$ increases. In this paper, $s=2$, and  $n_{\max}=\frac{J}{2}+O\left(\frac{1}{J}\right)$ following from (\ref{nmax}), i.e., $J$ linearly scales with $n$ as $J\approx 2n_{\max}$ as $n$ tends to infinity.  
\begin{theo}{\bf Kolmogorov-Arnold representation theorem \cite{kolmogorov1957representation}.} Given a multivariate continuous function $f$, Kolmogorov and Arnold established that $f$ can be written as a finite composition of continuous functions of a single variable and the binary operation of addition \cite{arnolddessert}, \cite{kolmogorov1957representation}. More specifically, 
\begin{align}
\label{KolmogorovArnold_representation}
f({\mathbf{x}^n})=f(x_1,\dots,x_n)=\sum\limits_{j=0}^{2n} \Psi_j\Big(\sum\limits_{p=1}^n \psi_{j,p}(x_p)\Big).  
\end{align}
\end{theo}
For the representation result of Kolmogorov-Arnold, there exist proofs with specific constructions \cite{braun2009constructive}. In (\ref{KolmogorovArnold_representation}) the only true multivariate function is the sum since every function can be written using univariate functions and summing \cite{kolmogorov1957representation}. Hyper binning is a special case of (\ref{KolmogorovArnold_representation}) such that $\psi_{j,p}(x_p)=a_{jp}\cdot x_p$ for some constant $a_{jp}$, for $j\in\{0,\dots,2n\}$ and $p\in\{1,\dots,n\}$, where we approximate the upper limit $2n$ of the outer sum with $J$ hyperplanes. We note that for a class of functions we are interested in, it is possible that a subset of outer functions $\{\Psi_j\}_{j=0}^{2n}$ may be equal to $0$. For a continuous function given in the most general form in (\ref{KolmogorovArnold_representation}), the approximation is more accurate when the hyperplanes are selected to capture the high-influence $\Psi_j$ terms, meaning the terms that have a more dominant impact on $f$. Furthermore, we model ${\mathbf{x}^n}$ using feature vectors $\{{\bf x}_t\}$ coming from either source, which we detail in Sect. \ref{HyperBinDesign}. 
The outer function $\Psi_j$ satisfies
\begin{align}
\label{outer_function}
\Psi_j(y)=c_j\cdot \mathbbm{1}_{y \geq b_j}+d_j\cdot \mathbbm{1}_{y < b_j},
\end{align}
where $c_j$ and $d_j$ are constants with opposite signs to represent the two different directions of hyperplane $j$.

Combining the upper bound in (\ref{nmax}), which states $n_{\max}\approx \frac{J}{2}$, and the upper limit $2n$ of the sum in (\ref{KolmogorovArnold_representation}), we see that choosing $J=2n$ makes sure that in the representation theorem of Kolmogorov-Arnold in \cite{kolmogorov1957representation}, the error of representation vanishes as $n\to\infty$. The linear scaling between $n$ and $J$ for $s=2$ in the asymptotic regime implies that random binning and hyper binning have similar performance for quantized variables as $n\to\infty$. These schemes differ at finite blocklengths since orthogonal binning requires quantization, whereas hyper binning provides an already quantized representation, where the bins are determined by a hyperplane tessellation capturing $f$. The half-space decision probabilities are parametrized by the hyperplanes (to be given by (\ref{pi_function})), and the decision regions of hyperplanes are described via binary entropy functions of these probabilities, eliminating the need for post-quantization. In orthogonal binning, quantization is followed by binning, where the latter corresponds to the post-quantization phase. Unlike in orthogonal binning, in hyper binning, the initial quantization phase via hyperplanes eliminates the need for further binning. Thus, there is no need for post-quantization. We will validate the choice of $J$ (i.e., the number of hyperplanes) to ensure the desired distortion in an MSE sense in Sect. \ref{distortion}, and tie the modeling of  finite blocklengths to Kolmogorov complexity \cite[Ch. 7.3]{cover2012elements} in Sect. \ref{compression_finite_blocklength}, respectively.

Hyper binning embeds the quantization phase and is discrete, i.e., it does not require further post-quantization prior to compression. Hence, we can apply the scheme of Berger-Tung, detailed in \cite{tung1978multiterminal} and \cite{berger1978multiterminal},  
or the coding scheme in \cite{feizi2014network} and its generalization in \cite{basu2020hypergraph} on the quantized representation. Since the compression gain of hyper binning over random binning lies in the pre-quantization aspect, any compression scheme, such as  \cite{tung1978multiterminal,berger1978multiterminal,feizi2014network,basu2020hypergraph}, can be implemented on top of hyper binning to recover the function representation in (\ref{KolmogorovArnold_representation}).

We next give a fundamental separation result on convex sets.

\begin{theo}\label{separation}{\bf Supporting hyperplane \cite[Thm 1]{miao2010projection}.} 
A point ${\bf x}$ lies in $C$ if and only if $\max\limits_{\bf a} ({\bf a}^{\intercal} {\bf x}-S_C({\bf a}))\leq 0$, where $S_C({\bf a})=\sup\{{\bf a}^{\intercal}{\bf z}\,:\, {\bf z}\in C\}$ is the supporting hyperplane. 
\end{theo}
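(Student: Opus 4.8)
The plan is to prove both implications directly, reading $S_C$ as the support function of the closed convex set $C$. For the forward direction, suppose ${\bf x}\in C$. Then ${\bf x}$ is itself one of the competitors in the supremum defining $S_C({\bf a})=\sup\{{\bf a}^{\intercal}{\bf z}:{\bf z}\in C\}$, so ${\bf a}^{\intercal}{\bf x}\leq S_C({\bf a})$, i.e. ${\bf a}^{\intercal}{\bf x}-S_C({\bf a})\leq 0$, for every admissible ${\bf a}\in S^{n-1}$. Taking the maximum over ${\bf a}$ preserves the bound, giving $\max_{\bf a}({\bf a}^{\intercal}{\bf x}-S_C({\bf a}))\leq 0$. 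This half needs nothing beyond the definition of the supremum and does not even use convexity.

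The nontrivial direction is the converse, which I would establish by contraposition: assuming ${\bf x}\notin C$, I will exhibit a single direction ${\bf a}$ for which ${\bf a}^{\intercal}{\bf x}-S_C({\bf a})>0$, forcing the maximum to be strictly positive. The tool is the unique Euclidean projection $P_C\,{\bf x}$ guaranteed by the closedness and convexity of $C$ (recalled in Section~\ref{convex_background}). I set ${\bf a}={\bf x}-P_C\,{\bf x}$, which is nonzero precisely because ${\bf x}\notin C$. The optimality of the projection yields the variational inequality ${\bf a}^{\intercal}({\bf z}-P_C\,{\bf x})\leq 0$ for all ${\bf z}\in C$, equivalently ${\bf a}^{\intercal}{\bf z}\leq {\bf a}^{\intercal}P_C\,{\bf x}$, so that $S_C({\bf a})\leq {\bf a}^{\intercal}P_C\,{\bf x}$.

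It then remains to compare ${\bf a}^{\intercal}{\bf x}$ against this bound. Writing
\[
{\bf a}^{\intercal}{\bf x}={\bf a}^{\intercal}({\bf x}-P_C\,{\bf x})+{\bf a}^{\intercal}P_C\,{\bf x}=\normtwo{{\bf a}}^2+{\bf a}^{\intercal}P_C\,{\bf x},
\]
I obtain ${\bf a}^{\intercal}{\bf x}-S_C({\bf a})\geq \normtwo{{\bf a}}^2>0$. Normalizing ${\bf a}$ to lie on $S^{n-1}$, as required by the hyperplane parametrization in~\eqref{HyperplaneEquation}, rescales both ${\bf a}^{\intercal}{\bf x}$ and $S_C({\bf a})$ by the same positive factor and hence leaves the sign of the difference unchanged, so $\max_{\bf a}({\bf a}^{\intercal}{\bf x}-S_C({\bf a}))>0$. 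Combining the two directions gives the stated equivalence.

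I expect the main obstacle to be one of care rather than depth: one must check that the separating direction can be chosen with $S_C({\bf a})$ finite, which holds since $S_C({\bf a})\leq{\bf a}^{\intercal}P_C\,{\bf x}<\infty$ for this particular ${\bf a}$, even when $C$ is unbounded and the support function equals $+\infty$ in other directions; and one must respect the unit-norm constraint on ${\bf a}$ without breaking the strict inequality. Both are dispatched by the projection argument above, which is exactly the Hilbert projection theorem specialized to produce a strictly separating hyperplane.
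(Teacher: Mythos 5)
Your proof is correct. There is, however, nothing in the paper to compare it against: the theorem is imported as background and its proof is deferred entirely to the cited reference \cite[Theorem 1]{miao2010projection}, so your self-contained argument supplies more than the paper itself does. What you give is the standard argument: the forward direction is immediate from the definition of the supremum, and the converse is strict separation via the Hilbert projection theorem, taking ${\bf a}={\bf x}-P_C\,{\bf x}$, using the variational inequality $({\bf x}-P_C\,{\bf x})^{\intercal}({\bf z}-P_C\,{\bf x})\leq 0$ for all ${\bf z}\in C$ to get $S_C({\bf a})\leq {\bf a}^{\intercal}P_C\,{\bf x}$, hence ${\bf a}^{\intercal}{\bf x}-S_C({\bf a})\geq \normtwo{{\bf a}}^2>0$, and then appealing to positive homogeneity of $S_C$ to restore the unit-norm constraint of \eqref{HyperplaneEquation}; your attention to finiteness of $S_C({\bf a})$ in the separating direction and to the sign being preserved under normalization is exactly the care the argument needs. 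The only ingredient you invoke without proof is the variational inequality itself; since Section~\ref{convex_background} of the paper already recalls existence and uniqueness of $P_C\,{\bf x}$, citing its standard first-order characterization is reasonable, though expanding $\normtwo{{\bf x}-\left(P_C\,{\bf x}+t({\bf z}-P_C\,{\bf x})\right)}^2$ and letting $t\downarrow 0$ would make the proof fully self-contained.
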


%%%%%%%%%%%%%%%%%%%%%%%%%%%%%
We next detail the necessary conditions for distributed functional compression of sources via hyper binning.

%%%%%%%%%%%%%%%%%%%%%
\subsection{Necessary Conditions for Encoding the Functions} 
\label{NecessaryConditions}  
Let $X_1$ and $X_2$ be real-valued source random variables, i.e.,  $\mathcal{X}_i\subseteq \mathbb{R}$ for $i\in\{1,2\}$. 
Our approach entails some assumptions on the function. 
Hyper binning yields a partitioning of the joint sources' data to convex sets $P_k$, where $k\in\{1, \ldots, M\}$ such that $M$ is the total number of partitions obtained from a set of hyperplanes in GP. Given a number of hyperplanes in GP, the hyperplanes attain the maximum number of partitions $M$. 

This section  details the necessary conditions for the function for encoding. 
We emphasize that the function $f$ could be continuous or discrete. Our goal is to describe a class of continuous or discrete functions via their quantized estimates $\hat{f}$.   
When $f$ is continuous, the distortion between $f$ and $\hat{f}$ is bounded away from $0$. For this case, we refer the reader to Example 2, where we contrast the achievable rate reduction performed by different binning schemes, and 
to Example 5, where we consider Gaussian variables and analyze the rate-distortion function using the notion of $\epsilon$-achievable hypergraphs \cite{basu2020functional}. 
When $f$ is discrete, which is pertinent in information theory, we refer the reader to Examples 3, 4, 6, and 7 for various discrete-valued random variables and their functions using the notions of $\epsilon$-characteristic hypergraphs versus $D$-characteristic graphs, as detailed in \cite{feizi2014network,basu2020hypergraph}. The quantized function 
$\hat{f}:(\mathcal{X}_1,\mathcal{X}_2)\to \mathcal{Z}$ is such that the mapping $\mathcal{Z}\to \{1,\hdots, M\}$ is a bijection. 
From Remark \ref{convex_iff}, our model is restricted to a class of functions $f$ satisfying that if $\mathbb{P}(f \in P_k) = 1$, then $\mathbb{E}[f] \in P_k$ for each $P_k$ \cite{Iosif12}.

The function $f$ to be quantized, if continuous, has to be continuous at $(x_1,\,x_2)\in (\mathcal{X}_1,\,\mathcal{X}_2)$ since $f^{-1}(P_k)$ is a neighborhood of $(x_1,\,x_2)$ for every neighborhood $P_k$ of $f(x_1,\,x_2)$ in $\mathcal{Z}$.  
If it is not continuous at $(x_1,\,x_2)$, there might be a region $P_k$ for a given $k\in\{1, \ldots, M\}$ of $f(x_1,\,x_2)$ such that $f^{-1}(P_k)$ is not a neighborhood of $(x_1,\,x_2)$. In other words,  
multiple disjoint hyper bins may  
yield the same function outcome, which we do not explicitly capture in our setting. The domain of $f$ can also be discrete, as in \cite{feizi2014network}.

%%%
Using the basic properties of hyperplanes presented in this section, in Sect. \ref{HyperBinDesign}-(A-C) we will develop the scheme of {\em hyper binning} where the sources can be described by a Gaussian mixture model, a specific tractable instance of the general problem stated in Sect. \ref{problem}. The coding rate of hyper binning can be characterized using the binary entropy function, which we will detail.
We will also discuss the hyperplane arrangements for different rate-distortion models and demonstrate the  
performance of hyper binning versus the state-of-the-art solutions.

%%%%%%%%%%%%%%%%%%%%%%%%%%%%%%%%%%%%%%%%%%%%
\section{Designing Hyper Bins for Gaussian Mixtures}
\label{HyperBinDesign}
We design hyper binning for distributed functional quantization recalling that the source data is represented by feature vectors.  
We use linear discriminant analysis (LDA) to distinguish different classes of feature vector combinations yielding the same function outcome. LDA is a classification technique for the separation of multiple classes of variables using linear combinations of observations/features/measurements, where the classes are known a priori. LDA works when the measurements on independent variables for each observation are continuous quantities.  
The set of features $\{{\bf x}_t\}$ for each sample of an event has a known class $y$. The classification problem is to find a good predictor for the class $y$ of any sample of the same distribution 
given only an observation ${{\bf x}_t}$. In LDA the conditional probability density functions (pdfs) $p({{\bf x}_t}|y=0)$ and $p({{\bf x}_t}|y=1)$ are both the normal distribution with mean and covariance parameters $\left({\bm \mu}_0,\Sigma\right)$ and $\left({\bm \mu}_{1},\Sigma\right)$, respectively. 
Hence, the samples come from a Gaussian mixture model (GMM) given by $\phi_0\,\mathcal{N}({\bm \mu}_{0},\Sigma)+\phi_1\,\mathcal{N}({\bm \mu}_{1},\Sigma)$, where $\phi_0$ and $\phi_1$ are the cluster weights given by the proportion of feature vectors. The Bayes optimal solution is to predict points as being from the second class if the log-likelihood ratio is bigger than some threshold, so that the decision criterion of ${{\bf x}_t}$ being in a class $y$ is a threshold on the dot product ${\bf w}\cdot {{\bf x}_t}>c$, i.e., a function of the linear combination of the  observations, for some threshold $c$, where ${\bf w}=\Sigma ^{-1}({\bm \mu}_{1}-{\bm \mu}_{0})$ and $c={\bf w}\cdot {\frac {1}{2}}({\bm \mu}_{1}+{\bm \mu}_{0})$.   
The observation belongs to $y$ if the corresponding ${{\bf x}_t}$ is located on a certain side of a hyperplane perpendicular to ${\bf w}$. The location of the plane is defined by $c$. 

In the case of multiple classes, the GMM is given by
\begin{align}
\label{GaussianMixtureModel}
W_M\sim\,\,\gamma_k\sum\limits_{k=1}^{M}  \mathcal{N}({\bm \mu}_k, \Sigma)\ ,\quad M\geq 2 \ ,  
\end{align}
where $M$ is the total number of classes, ${\bm \mu}_k$ is the mean vector for class $k\in\{1,\dots,M\}$, $\Sigma$ is the covariance matrix (same for all $k$), $n_k$ is 
the count of feature vectors $\{{\bf x}_t\}$ of class $k$ in the data, $N=\sum\nolimits_{k=1}^M n_k$ is the total number of 
$\{{\bf x}_t\}$, and $\gamma_k=\frac{n_k}{N}$ is the relative count of class $k$ data. The analysis for LDA with 2 classes can be extended to find a subspace that contains the class variability. The conditional pdfs $p({\bf x}_t|y=k)$ 
for  
$\{{\bf x}_t\}$ of 
class $k$ are independent Gaussian variables %with
given by 
$\mathcal{N}({\bm \mu}_k,\Sigma)$   
(same for all $k$). The scatter between class variability is the sample covariance of the class means $\Sigma _{b}={\frac {1}{M}}\sum _{l=1}^{M}(\mu _{l}-\mu )(\mu _{l}-\mu )^{T}$, where $\mu$ is the mean of the class means. The class separation in a direction ${\bf w}$ is $S= {{\bf w}^{T}\Sigma _{b}{\bf w}}{({\bf w}^{T}\Sigma {\bf w})^{-1}}$.

%%%%%%%%%%%%%%%%%%%%%%%%%%%%%%%%%%%%%%%%
%%%
\subsection{Data and Hyperplane Arrangement}
\label{data}
The feature vectors $\{{\bf x}_t\}$ lie in an $n$-dimensional space and are modeled by a GMM given by (\ref{GaussianMixtureModel}), where each ${\bf x}_t$ can come from (belong to) either source $X_1$ or $X_2$, and $\{{\bf x}_t\}$ are independent and belong to the same GMM.

We employ a linear hyperplane arrangement for classifying $\{{\bf x}_t\}$. 
To describe this arrangement, we need $J (n+1)$ parameters in total, where $J$ is the number of hyperplanes. To achieve the desired distortion for a given function $f$, we shall choose $J$ following (\ref{nmax}) to represent or distinguish the desired number of distinct outcomes $M$ of $f$. The orientations of hyperplanes will depend on the correlations between $X_1$ and $X_2$ as well as the correlations between $(X_1,\,X_2)$ and $f$.

%%%%%%%%%%%%%%%%%%%%%%%%%%%%%%%%%%%%%%%%
%%%
\subsection{Optimizing Hyperplane Arrangement}%Vector quantized functional representation
\label{HyperplaneDistribution}

To provide a joint characterization of sources by capturing their correlation as well as the features of the function $f$, we exploit LDA. In LDA, the encoded data is obtained by projecting the source data on a hyperplane arrangement, and by looking at which side of each hyperplane the vector lies. The criterion of a vector being in a class $y$ is purely a function of this linear combination of the known observations. The observation belongs to $y$ if the corresponding vector is located on a certain side of a hyperplane. We independently design each hyperplane. 
The hyperplane arrangement, i.e., the collection of hyperplane parameters $({\bf a}, b)$, depends on the particular function $f$ and the distribution of the vectors $\{{\bf x}_t\}$.

For a hyperplane $\eta({\bf a}, b)$ described by vector ${\bf a}\in\mathbb{R}^n$ and $b\in\mathbb{R}$ as in (\ref{HyperplaneEquation}), the projected feature vector $u_t ={\bf a}^{\intercal} {\bf x}_t$ lies on one side of $\eta({\bf a}, b)$ if $u_t \leq b$, and on the other side if $u_t \geq b$. Mapping ${\bf x}_t$ to the $u_t$ space is equivalent to computing the inner product of the feature vector and ${\bf a}$. As a result of this linear mapping, 
the distribution for the one-dimensional mapping outcome that models class $k$ is also Gaussian, with the following mean and variance, respectively:
\begin{align}
m_k = {\bm \mu}_k^{\intercal} {\bf a},   \quad \sigma^2={\bf a}^{\intercal}\Sigma {\bf a},\quad k=1,\hdots, M.  
\end{align}
In our setup, we note that the feature vectors lie in a high dimensional space and form an independent set of Gaussian random variables. Because their linear projections onto hyperplanes are also Gaussian and independent, the notions of the {\em set of hyperplanes} and the {\em feature vector classes} are exchangeable. More specifically, with a careful choice of the parameters $\{m_k\}_{k=1}^M$ and $\sigma$, we can observe that (i) projecting multiple vector classes onto a single hyperplane is equivalent to (ii) projecting a set of feature vectors $\{{\bf x}_t\}$ onto $M$ hyperplanes to generate a total number of classes $M$ where each class index $k$ can be considered as a mapping from $\{{\bf x}_t\}$ to a hyper bin index. Using this analogy, we represent/describe our model in (ii) via the multi-class interpretation in (i).

In the multi-class interpretation, the number of feature vectors of class $k$ that lie to the right of $b={\bf a}^{\intercal} {\bm \mu}'$ for a hyperplane characterized by $\eta({\bf a}, b)$ is given by $n_{k,r}=n_k p_k$ where the probability that a feature vector belongs to partition $k$, or equivalently it lies to the right of $b$, is given by
\begin{align}
\label{pi_function}
p_k=Q\Big(\frac{|b-m_k|}{\sigma}\Big),\quad k=1,\hdots,M,
\end{align}
where $m_k$ is the one-dimensional mapped mean of ${\bf x}_t$ to the $u_t$ space given that it belongs to class $k$, and $Q(z)=\frac{1}{2}\erfc\big(\frac{z}{\sqrt{2}}\big)$ is the complementary cumulative distribution function (CDF) of the standard Gaussian distribution such that $Q(z)\to 0$ as $z\to\infty$ monotonically. 
An  
observation is that as $\sigma$ increases, $p_k$ becomes higher due to (\ref{pi_function}). As $\sigma$ increases, since $p_k$'s also increase, $p_{M+1}$ increases. Furthermore, $p_k$ increases in $m_k$ given that $b\geq m_k$. We assume that $p_k$ is a fixed constant, and the function $f$ determines the distribution $\{p_k\}_{k=1}^M$.

In the multi-hyperplane interpretation, let $q_j=\mathbb{P}({\bf a}_j^{\intercal} {\bf x}_t\geq b_j)$ be the probability that a feature vector lies to the right of $b_j$ for a hyperplane $j=1,\hdots, J$ characterized by $\eta({\bf a}_j, b_j)$, i.e., the tail probability of one-dimensional Gaussian variable. Hence, the relation between $\{p_k\}_{k=1}^M$ and $\{q_j\}_{j=1}^J$ satisfies 
\begin{align}
\label{p_k_vs_q_j}
p_k=\prod_{j\in S_k } q_j \prod_{j\notin S_k} (1-q_j), \quad k=1,\dots, M,   
\end{align}
where $S_k$ is the set of the hyperplanes $j$ for which the hyper bin $k$ lies to the right of $b_j$.  
Our goal is to decide  
the class of hyperplanes with optimal %set of parameters 
$\{({\bf a}_j, b_j)\}_{j=1}^J$ %for each hyperplane 
such that if we assign the feature vectors at each source to one of two partitions based on whether $u_t\leq b$ (or equivalently ${\bf x}^{\intercal} {\bf a}\leq b$), then the average of the entropy of the class distribution in the partitions is minimized \cite{padmanabhan1999partitioning}. Minimizing the entropy of partitioning is equivalent to maximizing the mutual information associated with the partitioning, i.e., the difference between the entropy of function $f$ and the average of the entropy of the partitions.

Our objective is to minimize the entropy of the partitioning. To that end, we choose the following mutual information metric associated with the partitioning via hyper binning:
\begin{align}
\label{MI}
   I(M,\{p_k\}_{k=1}^{M+1},\{n_k\}_{k=1}^M)= 
   h(p_{M+1}) - \sum\limits_{k=1}^M \gamma_k h(p_k).
\end{align}
This metric captures the accuracy of classifying the function outcomes. While $I(\cdot)$ depends on $\{p_k\}_{k=1}^{M+1}$, $\{n_k\}_{k=1}^M$ such that $N=\sum\nolimits_{k=1}^M n_k$, we only emphasize its dependence on the number of classes $M$ in the partitioning process, i.e., use $I(M)$ for brevity. The higher the entropy for the classification of $M$ partitions, i.e., $\{h(p_k)\}_{k=1}^M$, the lower $I(M)$ is. 

The trend of $I(M)$ in (\ref{MI}) depends on the distributions of $\{{\bf x}_t\}$.  
To maximize $I(M)$ via hyper binning, it is intuitive that $({\bf a}, b)$ should be such that $p_k$'s are close to $0$ or $1$ to minimize $h(p_k)$'s, and $p_{M+1}=\frac{1}{N}\sum\nolimits_{k=1}^M n_{k,r}=\frac{1}{N}\sum\nolimits_{k=1}^M n_kp_k$ is close to $0.5$, i.e., there are an approximately equal number of feature vectors in the two partitions to maximize $h(p_{M+1})$.

Assume for optimal $I(M)$ that each $p_k$ is approximately $0$ or $1$. As $\sigma$ increases, $h(p_{M+1})$ decreases, and  
$h(p_k)$, $k\in\{1,\dots,M\}$, increases. For the asymmetric case where $n_k$ 
is proportional to $p_k$,  
incrementing $M$ improves 
$I(M)$ since each added hyperplane provides %will provide 
more information to distinguish the function outcomes. However, for the symmetric case where $n_k=\frac{N}{M}$,  
adding beyond a certain number of hyperplanes does %will 
not help. We later demonstrate this behavior in Prop. \ref{MI_symmetric}.

%%%%%%%%%%%%%%%%%%%%%%%%%%%%%%%%%%%%%%%%
\subsection{Source Data Distribution Models for a GMM}
\label{source_data_distribution_symmetric_asymmetric}
We next assume that $m_1<\hdots<m_{M-1}<m_M$, and $\sigma$ is fixed. Hence, (\ref{pi_function}) implies that $p_1<\hdots<p_{M-1}<p_M$. Further assume that $\frac{1}{2}<p_k$ for all $k$, yielding $h(p_1)>\hdots>h(p_M)$.

\subsubsection{Asymmetric Data Distribution} 
\label{asymmetric}
Consider a scenario where the class distribution is such that each $n_k$ is proportional to $p_k$, i.e., $n_k = \beta p_k$ for some $\beta\in\mathbb{R}^{+}$. The asymmetry among $p_k$ is exacerbated by $n_k$. This asymmetry makes the classes more distinguishable. A function that satisfies this criterion, i.e., a non-surjective function, can be compressed well. 

We next determine a relation between the mutual information metrics $I(M)=I(M,\{p_k\}_{k=1}^{M+1},\{n_k\}_{k=1}^M)$ versus $I(M+1)=I(M+1,\{p_k\}_{k=1}^{M+2},\{\tilde{n}_k\}_{k=1}^{M+1})$ when we switch from $M$ to $M+1$ classes. To that end, we assume that (i) $n_k = \beta p_k$, $k\in \{1,\dots,M\}$ for $\beta\in\mathbb{R}^{+}$, (ii) the number of feature vectors of class $k$ in the source data distribution $W_{M+1}$, namely $\{\tilde{n}_k\}_{k=1}^{M+1}$ versus $\{n_k\}_{k=1}^M$, satisfies $\tilde{n}_k=\alpha n_k$ for some $\alpha\in[0,1]$, and (iii) the  
distribution $W_{M+1}$ is such that 
$\{p_k\}_{k=1}^{M+2}$ can be determined using $p_{M+2}=\frac{1}{N}\sum\nolimits_{k=1}^{M+1} \tilde{n}_k p_k$, noting that inclusion of the $M+1$-th class does not change the probability $\{p_k\}_{k=1}^M$ given in (\ref{pi_function}) that a feature vector (of class $k$) lies to the right of a hyperplane characterized by $\eta({\bf a}, b)$.

\begin{prop}\label{MI_lowerbound}
Consider a setting with two distributed sources and a function $f$ satisfying the  properties listed in Sect. \ref{NecessaryConditions}. Then for two data distributions $W_{M}$ and $W_{M+1}$ of the form (\ref{GaussianMixtureModel}) that are related via (i)-(iii) outlined above, we have the relation $I(M+1)\geq I(M),\,\, \forall M\geq 1$.
\end{prop}

\begin{proof}
We refer the reader to the supplementary material.
\end{proof}

%%%%
\subsubsection{Symmetric Data Distribution}
\label{symmetric}
We now consider the uniform class distribution case such that $n_k=\frac{N}{M}$. Unlike the asymmetric case the classes are less distinguishable. A function that satisfies this criterion cannot be compressed well due to its surjectivity. For the symmetric case, let $\bar{p}_M=\frac{1}{M}\sum\nolimits_{k=1}^M p_k$. Hence, we obtain $(M+1)\bar{p}_{M+1}=M\bar{p}_M+p_{M+1}$. Letting $\bar{h}_M=\frac{1}{M}\sum\nolimits_{k=1}^{M}  h(p_k)$, it is easy to note that $(M+1)\bar{h}_{M+1}=M\bar{h}_M+h(p_{M+1})$.

%%%%
\begin{prop}
\label{Bounding_MI_symmetric}
For symmetric data distributions $W_{M}$ and $W_{M+1}$ of the form (\ref{GaussianMixtureModel}) and related via (i)-(iii), it holds that 
\begin{align}
\label{MI_bounds_difference}
\hspace{-0.28cm}\frac{\bar{h}_M-h\left(\bar{p}_{M}\right)}{M+1}\leq I(M+1)-I(M) \leq \frac{\bar{h}_M-h(p_{M+1})}{M+1},
\end{align}
where we note that $h\left(\bar{p}_{M}\right)>h(p_{M+1})$. In the limit as $M\to \infty$, the gap $ I(M+1)-I(M)\to 0$ from the squeeze theorem.
\end{prop}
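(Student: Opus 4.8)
The plan is to specialize the metric (\ref{MI}) to the symmetric regime $n_k = N/M$. There the aggregate right-partition probability appearing as the argument of the first entropy reduces to the plain mean $\bar p_M = \frac{1}{M}\sum_{k=1}^M p_k$, and the weighted average of class entropies reduces to $\bar h_M = \frac{1}{M}\sum_{k=1}^M h(p_k)$, so that $I(M) = h(\bar p_M) - \bar h_M$. First I would form the increment $I(M+1)-I(M)$ and split it into a mean-probability part $h(\bar p_{M+1}) - h(\bar p_M)$ and a mean-entropy part $-\big(\bar h_{M+1}-\bar h_M\big)$, since appending the $(M+1)$-st class pushes these two pieces in opposite directions and each admits a clean estimate.

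Next I would evaluate the mean-entropy part exactly. Because $\bar h_{M+1}$ is the average of the $M+1$ quantities obtained from those defining $\bar h_M$ by adjoining $h(p_{M+1})$, a one-line manipulation yields the identity $\bar h_{M+1}-\bar h_M = \frac{h(p_{M+1})-\bar h_M}{M+1}$, so that $-\big(\bar h_{M+1}-\bar h_M\big)=\frac{\bar h_M - h(p_{M+1})}{M+1}$, which is exactly the common term that shows up in both bounds. In parallel I would write the new mean probability as the convex combination $\bar p_{M+1} = \frac{M}{M+1}\bar p_M + \frac{1}{M+1}p_{M+1}$. Combining these gives the exact decomposition $I(M+1)-I(M) = \big[h(\bar p_{M+1}) - h(\bar p_M)\big] + \frac{\bar h_M - h(p_{M+1})}{M+1}$, so the whole proposition collapses to sandwiching the single quantity $h(\bar p_{M+1}) - h(\bar p_M)$.

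For the upper bound I would use the ordering assumption $\tfrac12 < p_1 < \cdots < p_M < p_{M+1}$: it forces $\bar p_{M+1} > \bar p_M > \tfrac12$, and since $h$ is decreasing on $(\tfrac12,1)$ we get $h(\bar p_{M+1}) - h(\bar p_M)\le 0$, which leaves just the common term $\frac{\bar h_M - h(p_{M+1})}{M+1}$ and hence the claimed upper bound. For the lower bound I would invoke concavity of the binary entropy applied to the convex combination above; Jensen gives $h(\bar p_{M+1}) \ge \frac{M}{M+1}h(\bar p_M) + \frac{1}{M+1}h(p_{M+1})$, so $h(\bar p_{M+1}) - h(\bar p_M) \ge \frac{h(p_{M+1}) - h(\bar p_M)}{M+1}$, and adding the common term cancels the two $h(p_{M+1})$ contributions and reproduces precisely $\frac{\bar h_M - h(\bar p_M)}{M+1}$. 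The auxiliary inequality $h(\bar p_M) > h(p_{M+1})$ is the same monotonicity fact (since $p_{M+1} > \bar p_M$ with both above $\tfrac12$), and it also certifies that the lower bound does not exceed the upper one.

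Finally, the squeeze is routine: both numerators $\bar h_M - h(\bar p_M)$ and $\bar h_M - h(p_{M+1})$ are differences of binary entropies, hence lie in $[-1,1]$, so dividing by $M+1$ drives both bounds to $0$ as $M\to\infty$, and the squeeze theorem forces $I(M+1)-I(M)\to 0$. The hard part will be the bookkeeping in the lower bound: I must make sure the Jensen estimate on the convex combination carries exactly the $\frac{1}{M+1}$ weight so that the $h(p_{M+1})$ terms cancel cleanly against the mean-entropy identity, yielding the stated constant rather than merely an order-$\tfrac1M$ bound; the remaining ingredients are only monotonicity of $h$ on $(\tfrac12,1)$ and boundedness of the entropy.
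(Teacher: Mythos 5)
Your proof is correct and takes essentially the same route as the paper's: both arguments rest on the exact averaging identity $(M+1)\bar{h}_{M+1}=M\bar{h}_M+h(p_{M+1})$, Jensen's inequality (concavity of $h$) applied to $\bar{p}_{M+1}=\frac{M}{M+1}\bar{p}_M+\frac{1}{M+1}p_{M+1}$ for the lower bound, and monotonicity of $h$ on $(\tfrac12,1)$ together with $\bar{p}_{M+1}\geq\bar{p}_M>\tfrac12$ for the upper bound. The only difference is presentational --- you first isolate the increment $I(M+1)-I(M)$ and sandwich the single term $h(\bar{p}_{M+1})-h(\bar{p}_M)$, whereas the paper bounds $I(M+1)$ against $I(M)$ in two separate computations --- and the two are algebraically equivalent.
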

\begin{proof}
We refer the reader to the supplementary material.
\end{proof}

Prop. \ref{Bounding_MI_symmetric} provides a convergence result on $I(M)$ as in (\ref{MI_bounds_difference}). It also implies that the gains caused by the increments in $M$ provide diminishing returns, consistent with the intuition. 

For the uniform scenario, $I(M) =h\left(\bar{p}_{M}\right)-\bar{h}_M$. Because entropy is concave, $h(\bar{p}_M)\geq \bar{h}_M$. Let $h(\bar{p}^*_M)=\bar{h}_M$ such that $\bar{p}^*_M\leq 1/2$. This implies that $\bar{p}_M\in [\bar{p}^*_M, 1-\bar{p}^*_M]$.

\begin{prop}
\label{MI_symmetric}
For the symmetric data distribution model  with a source data distribution $W_{M}$ as in (\ref{GaussianMixtureModel}),  
$I(M)$ converges to
\begin{align}
\lim\limits_{N\to\infty} I(N) = I(1)+\sum\limits_{M=1}^{\infty}\frac{\bar{h}_M-h(p_{M+1})}{M+1}.
\end{align}
\end{prop}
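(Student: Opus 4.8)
The plan is to realize the limit as the value of a telescoping series whose increments are exactly the quantities bounded in Proposition~\ref{Bounding_MI_symmetric}. I would begin from the tautology
\[
I(N) = I(1) + \sum_{M=1}^{N-1}\bigl(I(M+1)-I(M)\bigr),
\]
observing that the base term vanishes, $I(1) = h(\bar p_1) - \bar h_1 = h(p_1) - h(p_1) = 0$, so the whole limit is assembled from increments. The first real step is to put each increment in closed form. In the symmetric regime $I(M) = h(\bar p_M) - \bar h_M$, and the uniform-count recurrence $(M+1)\bar h_{M+1} = M\bar h_M + h(p_{M+1})$ gives $\bar h_{M+1} - \bar h_M = -\tfrac{1}{M+1}\bigl(\bar h_M - h(p_{M+1})\bigr)$. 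Substituting this produces the exact relation
\[
I(M+1) - I(M) = \bigl(h(\bar p_{M+1}) - h(\bar p_M)\bigr) + \frac{\bar h_M - h(p_{M+1})}{M+1}.
\]

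The second term on the right is precisely the summand in the claimed series, and indeed it coincides with the \emph{upper} bound of Proposition~\ref{Bounding_MI_symmetric}; so the stated right-hand side is exactly $\sum_M$ of those upper bounds. Summing the exact increment and telescoping the first piece gives
\[
\lim_{N\to\infty} I(N) = I(1) + \bigl(h(\bar p_\infty) - h(\bar p_1)\bigr) + \sum_{M=1}^{\infty}\frac{\bar h_M - h(p_{M+1})}{M+1},
\]
with $\bar p_\infty = \lim_M \bar p_M$. The claim is therefore equivalent to the statement that the residual telescoped term $h(\bar p_\infty) - h(\bar p_1)$ makes no net contribution. I would try to close this by controlling the drift $\bar p_{M+1} - \bar p_M = \tfrac{1}{M+1}(p_{M+1} - \bar p_M)$: its numerator is bounded and $h$ is Lipschitz on any interval bounded away from $0$ and $1$ (ensured here by $p_k>1/2$), so the running mean converges and the per-step residual is $O(1/M)$.

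The step I expect to be the main obstacle is exactly this residual. Two issues compound. First, convergence of the series is not automatic: the bounds of Proposition~\ref{Bounding_MI_symmetric} are only of harmonic order, $|I(M+1)-I(M)| = O(1/M)$, so they show the increments vanish but not that their sum converges; I would instead argue convergence from the fact that $I(M)\in[0,1]$ (nonnegativity by Jensen, boundedness by $h\le 1$) together with an eventual-monotonicity argument, after which the telescoping value is simply $\lim_M I(M) - I(1)$. Second, the exact closed form holds only if $h(\bar p_\infty) = h(\bar p_1)$, i.e. the running mean returns to its starting entropy level; absent that, the stated expression is the correct convergent \emph{majorant} of the limit, obtained by replacing each increment with its Proposition~\ref{Bounding_MI_symmetric} upper bound. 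Establishing the genuine equality thus hinges on the behavior of $\bar p_M$ under refinement, and that is where I would concentrate the work.
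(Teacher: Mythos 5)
Your derivation follows the same skeleton as the paper's proof---write $I(N)-I(1)$ as a sum of increments $I(M+1)-I(M)$ and control the increments via Proposition~\ref{Bounding_MI_symmetric}---but it is genuinely sharper, and the extra precision matters. The paper only sums the two-sided bounds of Proposition~\ref{Bounding_MI_symmetric},
\[
\sum_{M=1}^{N-1}\frac{\bar{h}_M-h\left(\bar{p}_{M}\right)}{M+1}\;\leq\; I(N)-I(1)\;\leq\;\sum_{M=1}^{N-1}\frac{\bar{h}_M-h(p_{M+1})}{M+1},
\]
and then passes to the claimed equality by an argument ("the summands tend to $0$, hence Cauchy, hence convergent") that conflates convergence of the sequence of summands with convergence of the series, and in any case never shows that the increments \emph{equal} the upper bounds. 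Your exact identity
\[
I(M+1)-I(M)=\left(h(\bar{p}_{M+1})-h(\bar{p}_M)\right)+\frac{\bar{h}_M-h(p_{M+1})}{M+1}
\]
makes the discrepancy explicit: after telescoping, the true limit carries the residual $\lim_{M\to\infty} h(\bar{p}_M)-h(\bar{p}_1)$, so the stated formula is correct if and only if $\lim_{M\to\infty} h(\bar{p}_M)=h(p_1)$.

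Your suspicion that this residual is the real obstacle is correct, and you can push it to a conclusive verdict. Note first that the claimed series itself telescopes, since $\frac{\bar{h}_M-h(p_{M+1})}{M+1}=\bar{h}_M-\bar{h}_{M+1}$; its partial sums are $\bar{h}_1-\bar{h}_N$, and $\bar{h}_N$ is decreasing (each $h(p_{M+1})$ lies below the running mean, because $h(p_1)>h(p_2)>\cdots$ under the paper's assumptions) and bounded below by $0$, so the series converges automatically---this disposes of your convergence worry without any eventual-monotonicity argument for $I(M)$. The right-hand side of the proposition therefore equals $I(1)+h(p_1)-\lim_{N}\bar{h}_N$, while the left-hand side equals $\lim_{N}h(\bar{p}_N)-\lim_{N}\bar{h}_N$; the proposition is thus exactly the assertion $\lim_{N}h(\bar{p}_N)=h(p_1)$. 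But under the paper's standing assumptions ($p_k$ strictly increasing, $p_k>1/2$), $\bar{p}_N$ strictly increases and $h$ is strictly decreasing on $(1/2,1)$, so $\lim_{N}h(\bar{p}_N)<h(p_1)$ whenever the $p_k$ are not all equal. So there is no gap in your argument: the gap is in the statement and in the paper's own proof, the right-hand side is only the convergent majorant you identified, and your exact-increment decomposition is the cleanest way to exhibit this.
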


\begin{proof}
The proof follows from convergence of $\{\frac{\bar{h}_M-h(p_{M+1})}{M+1}\}_M$ to $0$ as $M\to \infty$. 
For details, we refer the reader to the supplementary material.
%See Appendix \ref{App:MI_symmetric}. 
\end{proof}

In Fig. \ref{fig:MI} (c)-(d), we illustrate the variation of $I(M)$ with respect to $M$ for different $\sigma^2$. On the left, the class distribution is asymmetric such that $n_k \propto p_k$. Here, the monotone increasing trend of $I(M)$ can be observed. On the right, the class distribution is uniform such that $n_k=N/M$. Note that $I(M)$ drops with $\sigma$ because the higher variability, the harder it becomes to distinguish the classes. We observe this trend in both cases. In the asymmetric case with $n_k\propto p_k$, we have the relation $h\left(\frac{1}{N}\sum\nolimits_{k=1}^M n_k p_k\right)>h\left(\frac{1}{M}\sum\nolimits_{k=1}^M p_k\right)$. Furthermore, $\sum\nolimits_{k=1}^M \gamma_k h(p_k)<\frac{1}{M}\sum\nolimits_{k=1}^M  h(p_k)$. Hence, $I(M)$ is always higher for the asymmetric case than for the symmetric case.

\begin{figure*}[h!]
\centering
\includegraphics[width=\textwidth]{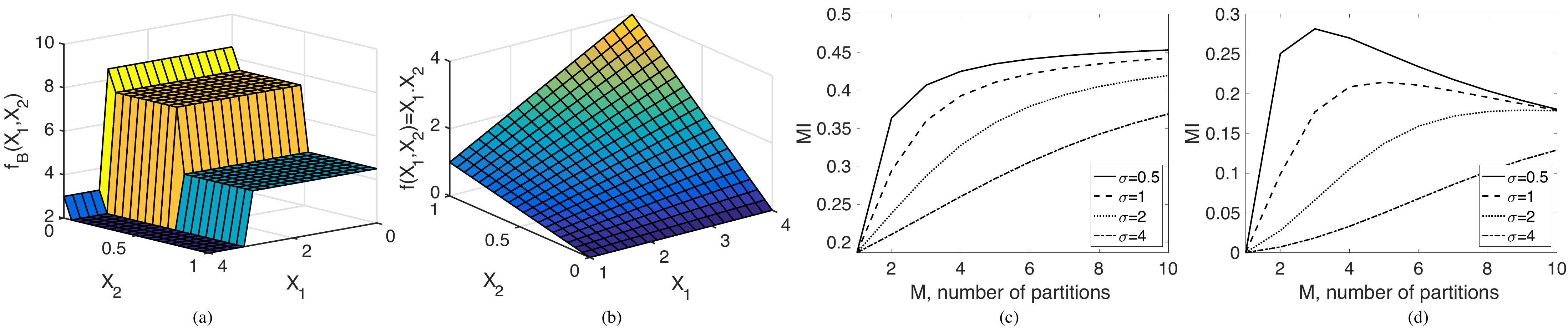}
\caption{\small{(a) A block function (no correlation across bins). (b) A smooth function (correlation across bins). 
Mutual information $I(M)$ versus $M$: (c) Asymmetric, $n_k\propto p_k$, (d) Symmetric, $n_k=N/M$.}}\label{fig:MI}
\end{figure*}

%%%%
\subsection{Rate-Distortion Models for Hyper Binning}
\label{distortion}
The rate-distortion function is the solution of the problem $R(D)=\min\nolimits_{p_{\hat{X}\vert X}(\hat{x}\vert x)} \big\{I(X;\hat{X})\, :\, \mathbb{E}[d(X,\hat{X})]\leq D\big\}$, where $p_{\hat{X}\vert X}(\hat{x}\vert x)$ is the conditional probability density function (PDF) of the compressed signal $\hat{X}$ for the original signal $X$.

For hyper binning, given a distortion level $D>0$, the rate-distortion function for $f({\mathbf{X}_1^n},\,{\mathbf{X}_2^n})$ or shortly for $\mathbf{f}^n$ satisfies 
\begin{align}
\label{Rate_distortion_df}
R(D)=\min\limits_{J} \Big\{\sum_{j=1}^J h(q_j) \,:\, \mathbb{E}[d({\mathbf{f}^n},{\mathbf{\hat{f}}^n})]\leq D\Big\}.
\end{align}
Using LDA classification error we can compute $\mathbb{E}[d({\mathbf{f}^n},{\mathbf{\hat{f}}^n})]$ in (\ref{Rate_distortion_df}), using the CDF of $\{{\bf x}_t\}$ and relation (\ref{p_k_vs_q_j}). When there are two classes to be distinguished, the Bhattacharyya bound upper bounds the error probability \cite{Mazumdar2017}, \cite{duda2006pattern}. For the multi-class model, we leave the error analysis as future work.

Exploiting the representation in (\ref{KolmogorovArnold_representation}) where the outer function satisfies (\ref{outer_function}), we next consider different distortion criteria.

\paragraph{Entropy-based distortion}
\label{entropy_based_distortion}
In random binning, the entropy for the $J$-bit quantization of ${\mathbf{x}_1^n}$ is $h({\mathbf{x}_1^n})+J$, where $h({\mathbf{x}_1^n})$ denotes the differential entropy of ${\mathbf{x}_1^n}$ and $\Delta=2^{-J}$ is the bin length. For a Gaussian vector ${\mathbf{x}_1^n}$  with a covariance matrix $\Sigma$, the entropy of its $J$-bit quantization is approximately 
\begin{align}
\label{entropy_quantized_x1n}
h({\mathbf{x}_1^n}) + J={\frac {1}{2}}\log((2\pi e)^{n}\det {\Sigma})+J.    
\end{align}
On the other hand, in hyper binning, we derive a vector quantized functional representation of the data vector using $J$ hyperplanes in total. The rate needed for this procedure is
\begin{align}
\label{entropy_hyperplanes}
\sum\limits_{j=1}^J h(q_j)=\sum\limits_{j=1}^J h\Big(Q\Big(\frac{b_j-{\bf a}_j^{\intercal}{\bm \mu}}{\sqrt{{\bf a}_j^{\intercal}\Sigma {\bf a}_j}}\Big)\Big),
\end{align}
where $q_j=\mathbb{P}({\bf a}_j^{\intercal} {\mathbf{x}_1^n}\geq b_j)$ for $j=1,\dots,J$, and (\ref{entropy_hyperplanes}) is upper bounded by $J$ because $h(q_j)\leq 1$ for all $j$. Similarly, the sum rate required for the exact description of (\ref{KolmogorovArnold_representation}) with the outer function in (\ref{outer_function}) is $\sum\nolimits_{j=1}^{2n} h(q_j)=\sum\nolimits_{j=1}^{2n} h\Big(Q\Big(\frac{b_j-{\bf a}_j^{\intercal}{\bm \mu}}{\sqrt{{\bf a}_j^{\intercal}\Sigma {\bf a}_j}}\Big)\Big)$.

We next give a necessary condition on $J$ to meet the entropy-based distortion measure, e.g., similar to \cite{courtade2011multiterminal}. We note that this result is valid for distributions beyond Gaussians.
\begin{prop}\label{entropy_based_distortion_necessary_condition}
(Entropy-based distortion for continuous random variables at infinite blocklengths.) Fix an $\epsilon>0$. Given an entropy-based distortion criterion $\mathbb{E}[d({\mathbf{f}^n},{\mathbf{\hat{f}}^n})]=[h(X)-R(D)]^+\leq \epsilon$ as $n\to\infty$, using the rate needed in (\ref{entropy_hyperplanes}) for recovering the hyper binning representation of $f$ and the rate-distortion function  
in \cite{cover2012elements} for  
squared-error distortion, the number of hyperplanes $J$ is required to satisfy the condition: 
\begin{align}
J = \min_{k} \Big\{k\,: \,  \sum\limits_{j=k+1}^{2n} h(q_j)\leq\epsilon\Big\}. 
\end{align}
\end{prop}

If source $X$ is Gaussian distributed with variance $\sigma^{2}$ and memoryless, then $h(X)=\frac{1}{2}\log(2\pi e\sigma^2)$. The rate–distortion function with squared-error distortion is given by \cite{cover2012elements}:
\begin{align}
R(D)=[h(X)-h(D)]^{+}=\Big[\frac{1}{2}\log_{2}\Big(\frac{\sigma^{2}}{\sigma^2_D}\Big)\Big]^{+},
\end{align} 
where $h(D)=\frac{1}{2}\log(2\pi e \sigma^2_D)$ is the differential entropy of a Gaussian random variable $D$ with variance $\sigma_D^2$.

To capture the effect of distortion on random binning for the quantized vector ${\mathbf{x}_1^n}$ in (\ref{entropy_quantized_x1n}) where $h(X)$ is replaced by the rate-distortion function $R({\bf D})=[h({\mathbf{x}_1^n})-h({\bf D})]^{+}$ where 
\begin{align}
h({\bf D})={\frac {1}{2}}\log((2\pi e)^{n}\det {\Sigma_D})\leq \epsilon
\end{align}
for the Gaussian vector ${\bf D}$ with covariance matrix $\Sigma_D$. 
In the asymptotic regime, the number of typical codewords is approximately  $2^{\sum\nolimits_{j=1}^J h(q_j)}$ for hyper binning, versus $2^{n H(X_i)}$ typical sequences for random binning \cite{cover1975proof}, or $2^{n H_{G_{X_i}}(X_i)}$ for characteristic graph coloring \cite{feizi2014network} of source $i\in\{1,2\}$. For finite blocklengths, we will exploit Kolmogorov complexity for the quantized vector ${\mathbf{x}_1^n}$, which is to be detailed in (\ref{Kolmogorov_complexity}).

%%%%%%
%
\paragraph{Mean-squared error (MSE) distortion}
\label{MSE_distortion}
Given an MSE distortion criterion $\mathbb{E}[d({\mathbf{f}^n},{\mathbf{\hat{f}}^n})]=\frac{1}{n}\sum\nolimits_{l=1}^n (f(l) - \hat{f}(l))^2\leq\epsilon$, the approximation using $J$ hyperplanes yields an MSE:
\begin{align}
\label{MSE_hyperplanes}
\mathbb{E}\Big[\Big(\sum\limits_{j=J}^{2n}{{c_j}_{\{{\bf a}_j^{\intercal} {\bf x}_t \geq b_j\}}-{d_j}_{\{{\bf a}_j^{\intercal} {\bf x}_t<b_j\}}}\Big)^2\Big]\leq\epsilon.
\end{align}
We next give a sufficient condition to meet the MSE criterion. 
\begin{prop}\label{MSE_sufficient_condition}
(MMSE distortion for Gaussian random variables at infinite blocklengths.) Fix an $\epsilon>0$.  
The following condition on $J$ is sufficient to meet the MSE distortion criterion $\mathbb{E}[d({\mathbf{f}^n},{\mathbf{\hat{f}}^n})]=\frac{1}{n}\sum\nolimits_{l=1}^n (f(l) - \hat{f}(l))^2\leq\epsilon$ as $n\to\infty$, provided that $d_j=-c_j$ in the MSE expression of (\ref{MSE_hyperplanes}):
\begin{align}
\sum\limits_{k=J}^{2n} c_k \leq\sqrt{\epsilon}.    
\end{align}
\end{prop}
\begin{proof}
We refer the reader to the supplementary material.
\end{proof}

We can generalize (\ref{Rate_distortion_df}) and Prop. \ref{MSE_sufficient_condition} to finite blocklengths via the notions of dispersion \cite{kostina2012fixed} that we briefly discuss next. 
\paragraph{Hamming distortion}
\label{hamming_distortion}
For equiprobable source, the symbol error rate-distortion, i.e., $d({\mathbf{x}^n},{\mathbf{\hat{x}}^n}) = \sum\nolimits_{l=1}^n 1_{\{x_l \neq \hat{x}_l\}}$, results in $\mathbb{E}[d(X_l,\hat{X}_l)]=\mathbb{P}(X_l\neq \hat{X}_l)$. In this case, the rate-dispersion function is zero, and the finite blocklength coding rate is approximated by $R(D) +\frac{1}{2}\frac{\log n}{n}+O\big(\frac{1}{n}\big)$ \cite{kostina2012fixed}.

%%%%%%
\paragraph{Gaussian approximation}
For a stationary and memoryless source, with bounded and separable distortion, 
i.e., $d({\mathbf{x}^n}, {\mathbf{\hat{x}}^n}) =\frac{1}{n}\sum\nolimits_{l=1}^n d(x_l,\hat{x}_l)$), the coding rate can be modeled as a function of the rate dispersion $V(D)$ \cite[Thm 12]{kostina2012fixed}:
\begin{align}
\label{coding_rate_gaussian_approximation}
R(D) + \sqrt{\frac{V(D)}{n}}Q^{-1}(\epsilon) + \theta\Big(\frac{\log n}{n}\Big),  
\end{align}
where $\theta\big(\frac{\log n}{n}\big)$ in (\ref{coding_rate_gaussian_approximation}) grows asymptotically as fast as $\frac{\log n}{n}$. Here, $\theta$ is given by Eqns. (84)-(85) in \cite[Thm 12]{kostina2012fixed}, which is a more precise definition of the Big Theta $\Theta$ notation.

While $V(D)$ provides an approximation for the coding rate, in Sect. \ref{compression_finite_blocklength}, we establish a connection between Kolmogorov complexity \cite{cover2012elements} to bound the coding rate for hyper binning.

%%%
\section{ Binning for Distributed Source Coding}\label{binning_distributed_source_coding}
In this part, we detail a fundamental limit for the asymptotic compression of distributed sources followed by an achievable random binning. This type of random binning is equivalent to orthogonal quantization of typical source sequences, as we will describe in Prop. \ref{CoverBin}. We will then contrast the hyper binning scheme with other baselines that rely on random binning.

If the encoders and the decoder do not make use of the correlation between the sources, the lowest rate one can achieve for lossless compression is $H(X_i)$ for $X_i$ for $i\in\{1,2\}$. 

%%%%%
%%%%%%%%%%%%%%%%%%%%%%%%%%%%%%%%%%%%%%%%%
{\bf Slepian-Wolf Compression.} 
This scheme is the distributed lossless compression setting with source variables $X_1$ and $X_2$ jointly distributed according to $p_{X_1,\,X_2}$, where the function $f(X_1,\,X_2)$ is the identity function. In this case, the Slepian-Wolf theorem gives a theoretical bound for the lossless coding rate for distributed coding of the two statistically dependent i.i.d. finite alphabet source sequences $X_1$ and $X_2$ as \cite{slepian1973noiseless}:
\begin{align}
\label{rateregionSW}
R_{X_1} \geq H(X_1|X_2),\quad
&R_{X_2} \geq H(X_2|X_1),\nonumber\\
R_{X_1}+R_{X_2} &\geq H(X_1,\,X_2),
\end{align}
which implies that $X_1$ can be asymptotically compressed up to the rate $H(X_1|X_2)$ \cite{slepian1973noiseless}. This theorem states that making use of the correlation allows a much better compression rate to jointly recover $(X_1,\,X_2)$ at a receiver at the expense of vanishing error probability for long sequences, it is both necessary and sufficient to separately encode $(X_1, X_2)$ at rates satisfying (\ref{rateregionSW}). The codebook design is done in a distributed way, i.e., no communication is necessary between the encoders.

%%%%%			
{\bf Random Binning.} 
Distributed codebook design for computing functions $f$ on the data $(X_1,\,X_2)$ at the receiver sites is challenging, irrespective of whether or not $X_1$ and $X_2$ are correlated. A random code construction for source compression that achieves this fundamental limit, i.e., the Slepian-Wolf rate region for distributed sources given in \cite{slepian1973noiseless}, has been provided by Cover in \cite{cover1975proof}, which we detail next. 
\begin{prop}\label{CoverBin}{\bf Cover's random binning \cite{cover1975proof}.} 
Binning asymptotically achieves zero error for the identity function $f(X_1,\,X_2)=(X_1,\,X_2)$ when the encoders assign sufficiently large codeword lengths $nR_1$ and $nR_2$ in bits to each source sequence where $R_1 > H(X_1)$ and $R_2 > H(X_2\vert X_1)$. 
\end{prop}
\begin{proof}
Here, we list the steps of random binning, detailed in \cite{cover1975proof}, for the lossless source coding for single source case:
\begin{enumerate}[leftmargin=*]%[i.]
\item Each $\mathbf{x}^n\in\mathcal{X}^n$ is randomly and independently assigned an index $m(\mathbf{x}^n)\in[1:2^{nR}]$ uniformly over $[1:2^{nR}]$. Bin $\mathcal{B}(m)$ is a subset of sequences with the same index $m$. Both the encoder and decoder know the bin assignments.
\item The encoder, upon observing $\mathbf{x}^n \in \mathcal{B}(m)$, sends index $m$.
\item The decoder, upon receiving $m$, declares that $\mathbf{\hat{x}}^n$ to be the estimate of the source sequence if it is the unique typical sequence\footnote{For a typical set $\mathcal{T}_{\epsilon }^n\subset\mathcal{X}^n$, the probability of a sequence from $X^n$ being drawn from $\mathcal{T}_{\epsilon }^n$ is greater than $1-\epsilon$, i.e., $\mathbb{P}[\mathbf{x}^n\in \mathcal{T}_{\epsilon }^n]\geq 1-\epsilon$ \cite[Ch. 3]{cover2012elements}.} in $\mathcal{B}(m)$; otherwise, it declares an error.
\item A decoding error occurs if $\mathbf{x}^n$ is not typical, i.e., $\mathcal{E}_1=\{\mathbf{X}^n \notin \mathcal{T}_{\epsilon}^n\}$, or there are multiple typical sequences, 
i.e., $\mathcal{E}_2=\{\mathbf{\tilde{x}^n}\in\mathcal{B}(M)\,\,\rm{for\,\, some}\,\, \mathbf{\tilde{x}}^n\neq \mathbf{X}^n,\,\,\mathbf{\tilde{x}}^n\in \mathcal{T}_{\epsilon}^n\}$.
\item Let $M\sim {\rm Unif}[1 : 2^{nR}] \independent \mathbf{X}^n$ denote the random bin index of $\mathbf{X}^n \in \mathcal{B}(M)$. If $R > H(X) +  \delta(\epsilon)$, Cover has shown that the probability of error $P_e^n$ averaged over $\mathbf{X}^n$ and random binnings $\to 0$ as $n\to \infty$ \cite{cover1975proof}. Hence, there is at least a sequence of binnings with $P_e^n\to 0$ as $n\to \infty$.
\end{enumerate}
The result can easily be generalized to distributed sources.
\end{proof}

To illustrate the gains that we can achieve with an optimally designed hyper binning scheme and contrast with the existing well-known binning methods, we next devise an example. Our goal is to explore how informative different types of partitionings can be for quantifying a function.

\begin{ex}{\bf Contrasting different binning methods for distributed source coding for functional compression.}
\label{Different_binning_examples}
Consider a functional compression problem where the sources $X_1$ and $X_2$ are continuous-valued. We consider three ways of compressing the sources to recover an approximate representation at the decoder. 
While random binning is asymptotically optimal, for ease of exposition, we first assume that the blocklength satisfies $n=1$. To indicate their main features, we illustrate the encoding for different binning schemes in Fig. \ref{fig:HP}, where $X_1\in[0,1]$ and $X_2\in[0,1]$ that are both uniformly distributed, and that lie on the $y$ and $x$-axes, respectively. For example, in Slepian-Wolf encoding (Left), each source independently and uniformly partitions the source outcome into $4$ bins. Hence, there are $4\times 4=16$ bins in total.
The block binning scheme (Middle) trims some of the bins in the encoding scheme of Slepian-Wolf because the function is piecewise constant or block, and there is no correlation across bins. This approach modularizes the encoding into uniform quantization and compression (bin trimming). In this example, there are $4$ blocks and each $B_k$ can be obtained via aggregating the bins of Slepian-Wolf.
If the function is more general than a block function, orthogonal trimming may not work. Instead, hyper binning can leverage the function and its dependency on the jointly distributed sources via the regions created from the intersections of linear hyperplanes and can make the quantization phase function-oriented, where the hyperplane parameters $\{({\bf a}_j,\,b_j)\}_{j=1}^J$ are adjusted according to the function $f(X_1,X_2)$. As a result, this reduces the redundancy in compression because the quantization is tailored for recovering the intended function and is more effective. 
We next detail each binning scheme separately. We emphasize that for illustration purposes, we chose $n=1$.

({\bf Top}) {\bf Binning approach of Slepian-Wolf \cite{slepian1973noiseless}.} In the first scenario, the sources first uniformly (scalar) quantize ${\mathbf{x}_1^n}\in[0,1]^n$ and ${\mathbf{x}_2^n}\in[0,1]^n$ into a discrete set using $2$ bits each. The bin assignments $(m_1({\mathbf{x}_1^n}),\,m_1({\mathbf{x}_2^n}))\in [1:4]\times [1:4]$ for the source pair $({\mathbf{X}_1^n},\,{\mathbf{X}_2^n})$ takes $M=16$ possible outcomes, with each outcome being equally likely. 
The Slepian-Wolf encoding scheme distinguishes all possible jointly typical outcomes. However, the binning scheme does not capture the function's structure, i.e., it does not distinguish $f({\mathbf{X}_1^n},\,{\mathbf{X}_2^n})$ and $({\mathbf{X}_1^n},\,{\mathbf{X}_2^n})$ from each other. In this case with $M=16$ equally likely partitions (bins), $\mathbb{P}(({\mathbf{X}_1^n},\,{\mathbf{X}_2^n})=(i_1,\,i_2))=1/16$, the entropy of the partitions equals $H(\mathbf{X}_1^n,\,{\mathbf {X}_2^n})=\log_2(16)=4$. Then, $I_{SW} = H({\mathbf{X}_1^n},\,{\mathbf{X}_2^n})-H({\mathbf{X}_1^n},\,{\mathbf{X}_2^n})=0$. We show the block diagram for independent encoding and joint decoding of two correlated data streams ${\mathbf{X}_1^n}$ and ${\mathbf{X}_2^n}$ in Fig. \ref{fig:SW}.

({\bf Left}) {\bf Orthogonal trimming of the binning-based codebook.} When the function (on $[0,1]^2$) is piecewise constant in the blocks domain, then the uniform (scalar) quantization followed by trimming achieves an optimal encoding rate. The block binning or generalized orthogonal binning scheme can capture functions with the pair $({\mathbf{X}_1^n},\,{\mathbf{X}_2^n})$ having a blockwise dependence, such as the function shown in Fig. \ref{fig:MI} (a). In this example, there are $4$ blocks $B_k$, with indices $k=1,\hdots, 4$, corresponding to different function outcomes. Hence, $f_B({\mathbf{X}_1^n},\,{\mathbf{X}_2^n})$ and $({\mathbf{X}_1^n},\,{\mathbf{X}_2^n})$ can be distinguished under this blockwise partitioning. This encoding scheme is easy to implement by combining some of the blocks prior to implementing the Slepian-Wolf encoding scheme in each $B_k$. Clearly, this is more efficient than completely ignoring the function's structure and directly implementing the Slepian-Wolf encoding. Hence, for  sources sharing blockwise dependency, i.e., $H(f_B({\mathbf{X}_1^n},\,{\mathbf{X}_2^n}))<H({\mathbf{X}_1^n},\,{\mathbf{X}_2^n})$. In this example with $4$ blocks, we use $3$ hyperplanes, as shown in Fig. \ref{fig:HP} (Middle). Hence, for block binning $\mathbb{P}(B_k)=\mathbb{P}(f_B({\mathbf{X}_1^n},\,{\mathbf{X}_2^n})=k)=\sum\nolimits_{i_1,i_2 :\, f_B=k} p_{i_1\, i_2}$. The colored region $B_2$ has a probability $\mathbb{P}(B_2)=9/16$. Similarly, $\mathbb{P}(B_1)=3/16$, $\mathbb{P}(B_3)=\mathbb{P}(B_4)=2/16$. This implies that the entropy of the partitions equals $H(f_B({\mathbf{X}_1^n},\,{\mathbf{X}_2^n}))=1.67$. In this case, block binning yields $I_B=H({\mathbf{X}_1^n},\,{\mathbf{X}_2^n})-H(f_B({\mathbf{X}_1^n},\,{\mathbf{X}_2^n}))=2.33$. We show the block diagram for orthogonal trimming-based compression for piecewise constant functions $f_B({\mathbf{X}_1^n},{\mathbf{X}_2^n})$ in Fig. \ref{fig:orthogonal_binning}. 

\begin{figure}[h!]
\centering
\includegraphics[width=0.65\textwidth]{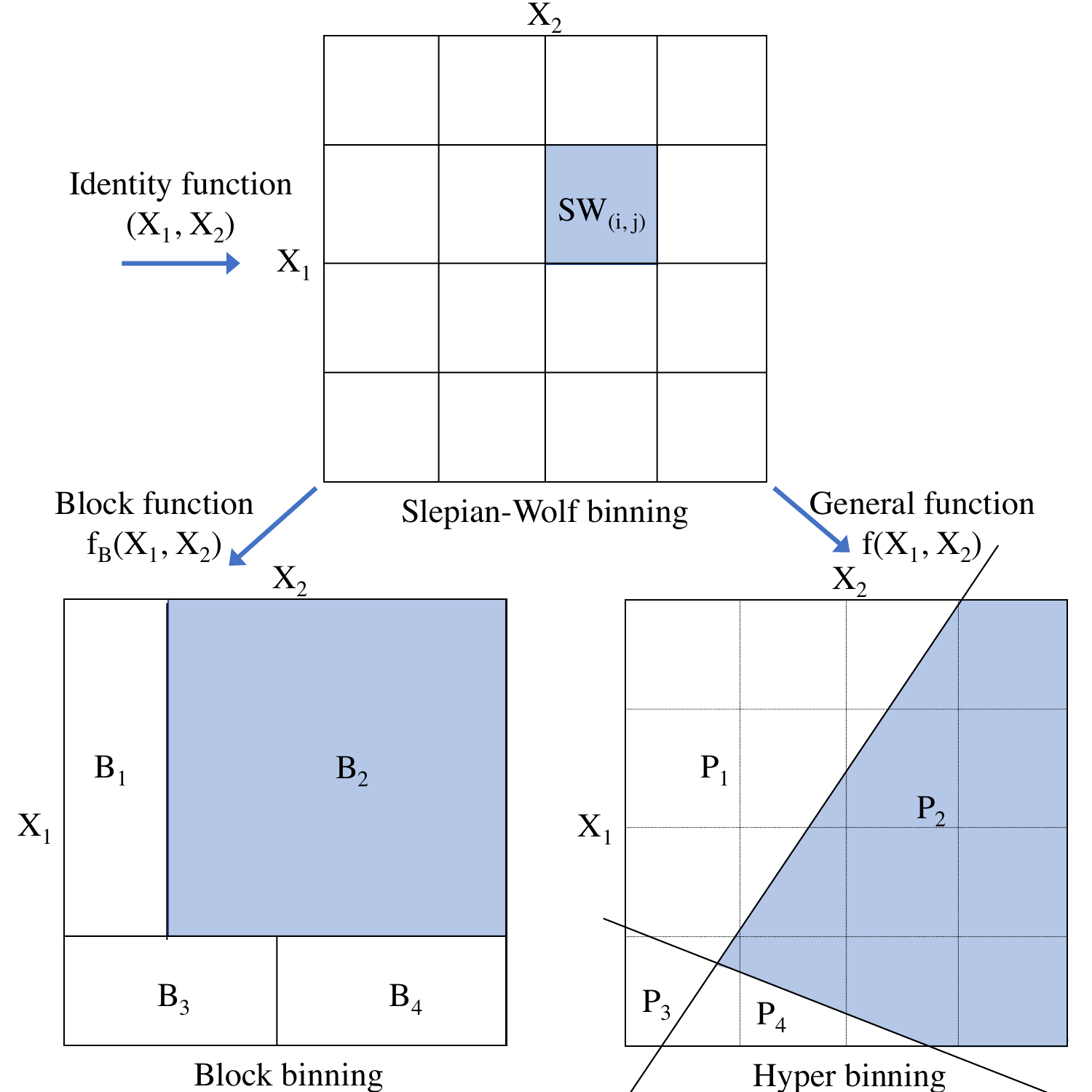}
\caption{\small{Hyperplane organization. (Top) Binning approach of Slepian-Wolf \cite{slepian1973noiseless}. 
(Left) Function sensitive, correlation insensitive partitioning. (Right) Function and correlation sensitive partitioning.}}\label{fig:HP}
\end{figure}

({\bf Right}) {\bf Hyper binning-based codebook.} If the function is not piecewise constant, then quantizing and then compressing may not be as good. The hyper binning scheme can capture the dependencies in the pair $({\mathbf{X}_1^n},\,{\mathbf{X}_2^n})$ and $f({\mathbf{X}_1^n},\,{\mathbf{X}_2^n})$, unlike the block binning scheme. In this scheme, we cannot consider the partitions $P_k$, with indices $k=1,\hdots, 4$, corresponding to function outcomes independently since each partition shares a non-orthogonal boundary to capture the dependency across the sources. With hyper binning, it is possible to jointly encode correlated sources as well as the function up to some distortion, determined by the hyperplane arrangement. As a result, for sources with dependency (more general than blockwise dependency), we can achieve $H(f({\mathbf{X}_1^n},\,{\mathbf{X}_2^n}))<H(f_B({\mathbf{X}_1^n},\,{\mathbf{X}_2^n}))$. We partition the region using $2$ hyperplanes in GP by incorporating the correlation structure between the function and the sources. In this case, $\mathbb{P}(P_1)=0.375$, $\mathbb{P}(P_2)=0.531$, $\mathbb{P}(P_3)=0.031$, $\mathbb{P}(P_4)=0.063$, and the entropy of the partitions satisfies $H(f({\mathbf{X}_1^n},\,{\mathbf{X}_2^n}))=1.42$ for each $k$. Hence, the hyper binning model yields $I(M)=H({\mathbf{X}_1^n},\,{\mathbf{X}_2^n})-H(f({\mathbf{X}_1^n},\,{\mathbf{X}_2^n}))=2.58$.  
For the example function with unit blocklength, i.e., $n=1$, as shown in Fig. \ref{fig:HP} (right), the x-axis intercepts are $0.67$ and $0.09$, and y-axis intercepts are $0.27$ and $-0.13$, and  $f:[0,1]^2\to \{1,2,3,4\}$. More specifically, $P_k$, $k=1,2,3,4$ specifies $f(x_1,x_2)$: 
\begin{align}
\label{hyperplane_example}
f(x_1,x_2)=k,\quad c_1 x_1+c_2 x_2,\,\, c_3 x_1+c_4 x_2\in P_k,
\end{align}
which is equivalent to $f(x_1,x_2)=1 \iff c_1 x_1+c_2 x_2>d_1,\,\, c_3 x_1+c_4 x_2>d_2$, and similarly for $f(x_1,x_2)\in\{2,3,4\}$, 
where $c_1=\frac{1}{0.27}$, $c_2=\frac{1}{0.67}$, $d_1=1$, and $c_3=0.68$, $c_4=-1$, $d_2=-0.09$, where the hyperplane parameters are such that the outcomes are as shown in Fig. \ref{fig:HP} (Right).
By letting $a_2=\frac{c_2}{c_1}-\frac{c_4}{c_3}$ and $b_2=\frac{d_1}{c_1}-\frac{d_2}{c_3}$, and $a_1=\frac{c_1}{c_2}-\frac{c_3}{c_4}$ and $b_1=\frac{d_1}{c_2}-\frac{d_2}{c_4}$, we can rewrite the RHS of (\ref{hyperplane_example}) for $k=1$ as
\begin{align}
\label{hyperplane_example_numbered}
f(x_1,x_2)=   
1 \iff a_1 x_1 > b_1,\,\, a_2 x_2 > b_2,
\end{align}
and similarly for $k\in\{2,3,4\}$, showing that we can reliably compute $f$ by using one hyperplane per source, i.e., $a_i x_i = b_i$, $i\in\{1,2\}$, even for $n=1$. For this example, we cannot characterize $f$ using block binning as illustrated in Fig. \ref{fig:HP} (Middle). That is because each function outcome is jointly decided. More specifically, given an outcome $f\in\mathcal{S}$, for random binning, we cannot find a disjoint set pair $\mathcal{S}_1$ and $\mathcal{S}_2$ such that $\mathbb{P}(f(X_1,X_2)\in\mathcal{S})\approx \sum\nolimits_{m_1(x_1)\in \mathcal{S}_1}\sum\nolimits_{m_2(x_2)\in\mathcal{S}_2}p(x_1,x_2)$. Hence, for $f(x_1,x_2)$ in (\ref{hyperplane_example}), hyper binning has higher accuracy than orthogonal binning in finite blocklengths $n$.  
While we can generalize hyper binning to $n\geq 2$, we next focus on the complexity of finite blocklengths due to space constraints.

\begin{figure*}[t!]
\centering
\includegraphics[width=\textwidth]{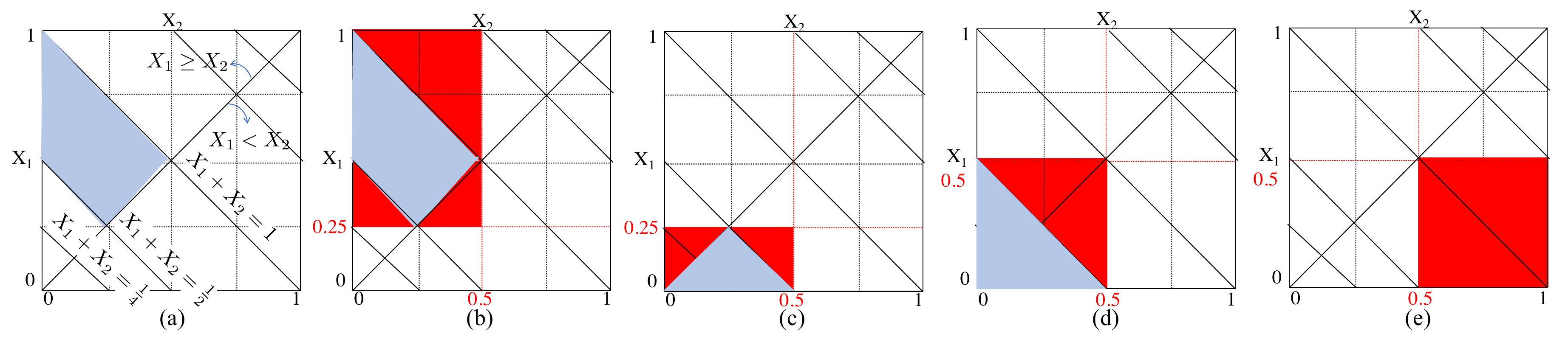}
\caption{\small{(a) Computing a convex region. (b)-(d) The source hyperplanes are emphasized (red) to represent the regions corresponding to several different function outcomes. Each outcome is the intersection of the tessellation with the red-shaded region. (e) A decoding error occurs.}}\label{fig:convexregionfunction_different_outcomes}
\end{figure*}

In Fig. \ref{fig:convexregionfunction_different_outcomes}-(a), we sketch how we compute a convex region via hyper binning for a simple example. An outcome, e.g., (b)-(d), is the intersection of the hyperplane tessellation formed by solid black lines with the red-shaded region specified by the sources. Some partitions, e.g., as shown in Fig. \ref{fig:convexregionfunction_different_outcomes}-(e), do not define a unique convex bin, i.e., a function outcome, causing decoding errors. Such events should have a low probability of occurrence via accurately capturing $\{{\bf a}_j,\,b_j\}_{j=1}^J$ (Props. \ref{entropy_based_distortion_necessary_condition}-\ref{MSE_sufficient_condition}).
\end{ex}

%%%%
\section{Hyper Binning at Finite Blocklengths}
\label{compression_finite_blocklength}
For finite blocklengths, the rate limits in (\ref{rateregionSW}) do not hold. In that case, we can exploit the notion of Kolmogorov complexity $K(\mathbf{x}^n)$, i.e., the minimum description length of a string $\mathbf{x}^n$. Let ${\mathbf{X}^n}$ be i.i.d. integer-valued variables with entropy $H(X)$, where $\mathcal{X}$ is their finite alphabet, and $\mathbb{E}\left[\frac{K({\mathbf{X}^n})}{n}\right]$ be the average shortest description length of length-$n$ sequence $\mathbf{X}^n$. Then, there is a constant $c$ such that the relation of Kolmogorov complexity and entropy for all $n$ satisfies \cite[Ch. 7.3]{cover2012elements}:
\begin{align}
\label{Kolmogorov_complexity}
H(X)\leq \mathbb{E}\left[\frac{K({\mathbf{X}^n})}{n} \right]\leq H(X)+\frac{|\mathcal{X}|\log n}{n}+\frac{c}{n}.
\end{align}
In random binning, the $J=-\log(\Delta)$ bit quantization of ${\mathbf{X}_1^n}$ has an entropy of approximately $h({\mathbf{X}_1^n})+J$, where the quantization bin length $\Delta$ satisfies $\Delta=2^{-J}$. For the $J$ bit quantization of a string ${\mathbf{x}_1^n}$, we obtain the average description length via the addition of $\frac{J}{n}$ bits on both sides of (\ref{Kolmogorov_complexity}) as 
\begin{align}
H(X_{\Delta})\leq \mathbb{E}\Big[\frac{K(\{X_{\Delta}(l)\}_{l=1}^n)}{n} \Big]
\leq H(X_{\Delta})+\frac{|\mathcal{X}_{\Delta}|\log n}{n}+\frac{c}{n},\nonumber
\end{align}
where $\mathcal{X}_{\Delta}$ is the alphabet for the quantized variable $X_{\Delta}$ with $|\mathcal{X}_{\Delta}|=2^J$, and $H(X_{\Delta})\approx \frac{1}{n}h({\mathbf{x}_1^n}) + \frac{J}{n}$ bits. The finite length $n$ description of $J$-bit quantization of ${\mathbf{X}_i^n}$, for $i\in\{1,2\}$ requires an additional $\frac{|\mathcal{X}_{\Delta}|\log n}{n}$ bits on top of quantization.

From (\ref{nmax}), we have $n\leq \frac{J}{2}+O\left(\frac{1}{J}\right)$. Combining this with (\ref{Kolmogorov_complexity}), the representation complexity of random binning due to the separation of quantization and compression phases is approximately $2$ bits higher than that of hyper binning. 
Hyper binning, unlike  orthogonal binning, eliminates the need for post-quantization. The $J$ bit vector quantization is tailored for the functions, and each function outcome relies on a collection of binary decisions. This process does not involve the quantization of continuous variables, i.e., approximating the differential entropy via the addition of $J$ bits. The complexity is solely determined based on the binary entropy function.

We show the diagram of hyper binning for compression in Fig. \ref{fig:hyperbinning}. Sampling is a suboptimal single-letter approach. In information theory,  coding and compression typically follow signal processing. Hyper binning does the compression step after signal processing and before coding. It captures the entire data vector instead of a single-letter representation, giving a functional equivalence of vector quantization.

%%%%
\section{Comparisons with the Existing Work}
\label{comparison_existing_work}
Characteristic hypergraph coloring in \cite{basu2020hypergraph} relies on $\epsilon-$ achievable schemes where the hyperedge-based construction exploits the fine granularity properties in the graph when a non-zero distortion is allowed. This distortion notion is more unified that generalizes the characteristic graph coloring approach in \cite{feizi2014network} via the modeling of hyperedges.

To contrast the hyper binning scheme with the existing work on graph coloring in \cite{feizi2014network} and its hypergraph-based coloring extension in \cite{basu2020hypergraph}, we next consider the Example 1 in \cite{basu2020hypergraph}. 

{\em $\epsilon$-characteristic hypergraphs vs $D$-characteristic graphs.} The $D$-characteristic hypergraph of $X$, $G^D_X$, has the vertex set $\mathcal{X}$ with any set $S \subseteq \mathcal{X}$ forming a hyperedge in $G^D_X$ if for any $x_1,\, x_2 \in S$, $d(x_1, x_2) \leq D$ where $d(\cdot)$ is some metric on $\mathcal{X}$. For independent sources $X_1$ and $X_2$, an outer bound to the achievable rate region using $D$-characteristic hypergraphs is given by \cite[Thm 43]{feizi2014network} as $\mathcal{R}^{(D/2)}_{G_{X_1,X_2}} = (R_1, R_2)$ such that
\begin{align}
R_i\geq H_{G_{X_i}(D/2)}(X_i),\quad i\in\{1,2\},    
\end{align}
where $H_{G_X(D)}(X)=\min\nolimits_{X\in W\in\Gamma(G_X^D)} I(W;X)$, and $W$ and $\Gamma(G_X^D)$ denote a hyperedge and the set of hyperedges in $G_X^D$.

\begin{ex}
Let $(X_1 , X_2 )$ be i.i.d. Bern($1/2$) variables. The decoder wants to compute the identity function $f(x_1, x_2 ) = (x_1 , x_2 )$ with $\epsilon = 0.5$. Authors in \cite{basu2020hypergraph} have demonstrated that this example achieves equality in the Berger-Tung bound $(R_1,R_2)\in\mathcal{R}_{i,\epsilon}$ and is optimal. The optimal rate region satisfies $(R_1,R_2)=\mathcal{R}_{\mathcal{G},\epsilon}$ such that $R_1\geq 0$, $R_2\geq 0$, and $R_1+R_2\geq 1$. For the same setting, using the approach in \cite{feizi2014network}, where $(R_1,R_2)= \mathcal{R}_{G_{X_1},G_{X_2}}^{D/2}$, the achievable rate is $R_i\geq 1$, which is contained in the inner region devised in \cite{basu2020hypergraph}. In hyper binning, since the function $f$ is identity and $X_1\independent X_2$, i.e., no CI between the sources, the rate region problem becomes equivalent to that of Slepian-Wolf with a distortion metric. For a fair comparison, we need to make a connection between entropy-based distortion, e.g., in \cite{courtade2011multiterminal}, versus the notion of $\epsilon$-characteristic graphs. Exploiting \cite{courtade2011multiterminal}, the rate region satisfies $R_1\geq H(X_1)-\epsilon\delta$, $R_2\geq H(X_2)-\epsilon(1-\delta)$ for $\delta\in [0,1]$. Since $H(X_i)=1$, $R_1\geq 1$ and $R_2\geq 0$ (and similarly $R_1\geq 0$ and $R_2\geq 1$) are achievable. Due to time-sharing, hyper binning can satisfy the optimal rate region of Berger-Tung. Exploiting the Hamming distortion where $\mathbb{P}(X_1\neq \hat{X}_1)\leq \epsilon$, the rate-distortion function for $X_1\sim$Bern($0.5$) satisfies 
\begin{align}
R_1(\epsilon)=(1-h(\epsilon))\cdot \mathbbm{1}_{0\leq \epsilon\leq 0.5}.
\end{align}
For recovering $(X_1,\,X_2)$ under the maximum norm constraint, letting $\norm{X_i-\hat{X}_i}\leq \epsilon_i$ for $i\in\{1,2\}$, we have $\sum\nolimits_{i=1}^2 (X_i-\hat{X}_i)^2\leq \sum\nolimits_{i=1}^2 \epsilon_i^2\leq \epsilon^2$. In this case, we obtain $R_1\geq 1,\,\, R_2\geq 1$ if $0\leq\epsilon_1,\,\epsilon_2< 1$, which implies $\mathbb{P}(X_i\neq \hat{X}_i)\leq 0$, and $R_1\geq 0,\,\, R_2\geq 0$ if $\epsilon_1,\,\epsilon_2\geq 1$, implying $\mathbb{P}(X_i\neq \hat{X}_i)\leq 1$.

Depending on the distortion criterion, we can achieve the same rates, e.g., for entropy-based distortion, as \cite{basu2020hypergraph}, or higher rates, e.g., for Hamming distortion. This conclusion holds as maximal distortion is, in general, restricted to discrete sources. It does not generalize to continuous variables, especially when we do not exploit the hypergraph structure. 
\end{ex}

We next consider a numerical example where there is no side information, which is in line with Example 2 in \cite{basu2020hypergraph}.
\begin{ex}
Let $X$ be uniformly distributed over $\{0,1,2\}$ and $f(X)=X$. Authors in \cite{basu2020hypergraph} have shown $R\in \mathcal{R}_{\mathcal{G},\epsilon}$ such that 
\begin{align}
R\geq\min\limits_{X\in W\in \Gamma(G_X^{\epsilon})} I(X;W)=\begin{cases}\log_2(3),\quad &0\leq\epsilon<0.5,\\
2/3,\quad &0.5\leq \epsilon<1,\\
0,\quad &1\leq\epsilon,\end{cases}\nonumber
\end{align}
where $G_X^{\epsilon}$ is an $\epsilon$-achievable hypergraph such that $\mathbb{E}[\mathbbm{1}_{\norm{W-X}>\epsilon}]=0$. If $\epsilon\in [0.5,\,1)$, then $H(W)=1$ since there are two maximal independent sets with $0.5$ probability each. Furthermore, $H(W|X)=\frac{1}{3}$ because $H(W|X=1)=1$ that happens with probability $\frac{1}{3}$ and $H(W|X\neq 1)=0$. Exploiting $D$-characteristic graph compression (no hyperedges) in \cite{feizi2014network}, the rate region specified by $R\in\mathcal{R}_{G_X}^D$ is given as
\begin{align}
R\geq \min\limits_{X\in W\in \Gamma(G_X^D)} I(X;W)= \begin{cases}
\log_2(3),\quad &0\leq D<2,\\
0,\quad &2\leq D.
\end{cases}\nonumber
\end{align}
In \cite{feizi2014network}, different from \cite{basu2020hypergraph}, when $D\in [1,\,2)$, the independent sets are singletons because there is no notion of hyperedges, and all source outcomes need to be distinguished. However, for $2\leq D$, we no longer need to differentiate the outcomes.
\end{ex}

In \cite{basu2020functional}, the authors extended the coloring scheme in \cite{feizi2014network} via hypergraphs. The graph $G_{X_i}^{\epsilon}$, $i\in\{1,2\}$ is an $\epsilon$-achievable hypergraph such that  $\mathbb{E}[\mathbbm{1}_{\norm{f(X_1,X_2)-\hat{f}(X_1,X_2)}>\epsilon}]=0$. The scheme in \cite{basu2020functional} results in a smoother decay in rate-distortion than that of \cite{feizi2014network}. Since the approaches in \cite{feizi2014network} and \cite{basu2020hypergraph} are for compressing  
post-quantized variables, without optimizing the quantization phase, for a fair comparison of hyper binning with them, we next draw an example with continuous variables.

\begin{ex}
Let $X_1$ and $X_2$ be distributed according to standard normal distribution $\mathcal{N}(0,1)$ and consider the function in (\ref{hyperplane_example_numbered}). Letting $X_{i,\Delta}=\Delta l$, for $X_i\in [l\Delta,\, (l+1)\Delta)$ and $i\in\{1,2\}$, and using the CDF of the standard normal distribution, denoted by $\Phi(x)=\frac{1}{\sqrt{2\pi}}\int\nolimits_{-\infty}^x e^{-t^2/2}{\rm d}t$, the quantized variables satisfy $\mathbb{P}(X_{1,\Delta}=\Delta l)=\Phi((l+1)\Delta)-\Phi(l\Delta)$. Evaluating (\ref{hyperplane_example_numbered}) using the quantized variables, we get $\mathbb{P}\Big(X_{1,\Delta}>\frac{b_1}{a_1}\Big)=\sum\nolimits_{\{l:\,l>\frac{b_1}{a_1 \Delta}\}} [\Phi((l+1)\Delta)-\Phi(l\Delta)]$, and similarly for $X_{2,\Delta}$. Then source $i\in\{1,2\}$ needs to decide whether $a_i x_i>b_i$ or not. The rate required for this model is
\begin{align}
\label{cont_example_Feizi}
h\Big(\mathbb{P}\Big(X_{1,\Delta}>\frac{b_1}{a_1}\Big)\Big)+h\Big(\mathbb{P}\Big(X_{2,\Delta}>\frac{b_2}{a_2}\Big)\Big).    
\end{align}
To achieve compression with the desired distortion (quantizer bin length $\Delta=2^{-J}$), this approach requires $J$ bits per source.

For this example, the encoding rates for different $\epsilon$ are:
\begin{itemize}[leftmargin=*]
    \item If $\epsilon\in [0,\,1)$, then for function in (\ref{hyperplane_example_numbered}) the result is identical to that of \cite{feizi2014network} and \cite{basu2020hypergraph}. If the source distributions are uniform, e.g., Gaussian variables binary quantized around the means, each user needs $1$ bit for compression, i.e., $\mathcal{R}_{G_{X_1},G_{X_2}}^{\epsilon}=\mathcal{R}_{\mathcal{G},\epsilon}$ such that $R_i\geq 1$ for $i\in\{1,2\}$. 
    \item If $\epsilon\in [1,\,2)$, the set of independent sets are 
    $\{\{1,2\},$ $\{2,3\},\{3,4\}\}$. 
    Given the interval of $X_2$, $X_1$ yields either of the hypergraphs $\{\{1,2\},\{2,3\}\}$ or $\{\{2,3\},\{3,4\}\}$. If the sources are uniform, each of these graphs has entropy $H(W)=1$, and $H(W|X_1)=1/4$. In either case, it holds that $\mathcal{R}_{\mathcal{G},\epsilon}=(R_1,R_2)$, where $R_1\geq 1$, $R_2\geq 3/4$, and the sum rate is $1+3/4=7/4$ in \cite{basu2020functional}. Similarly, $\mathcal{R}_{G_{X_1},G_{X_2}}^{\epsilon}=(R_1,R_2)$ where $R_1=R_2\geq 1$ in \cite{feizi2014network} since given $X_2$, $X_1$ yields either %of the outputs 
    $\{1,3\}$ or $\{2,4\}$. 
    If $\epsilon\in[1,\,2)$, given the interval of $X_1$, $X_2$ yields either $\{\{1,2\}\}$ or $\{\{3,4\}\}$. In either case, the sum rate is $1+0=1$ in \cite{basu2020functional}. In \cite{feizi2014network} $\mathcal{R}_{G_{X_1},G_{X_2}}^{\epsilon}$, where $R_1=R_2\geq 1$, and the sum rate is $1+1$ since given $X_1$, $X_2$ yields either $\{1,2\}$ or $\{3,4\}$. 
    
    \item If $\epsilon\in [2,\,3)$, $\mathcal{R}_{\mathcal{G},\epsilon}$ is such that $R_1\geq 1$, $R_2\geq 0$, and the sum rate is $1+0=1$ in \cite{basu2020functional}, which is similarly as in \cite{feizi2014network}. 
    
    In functional compression of (\ref{hyperplane_example_numbered}) the chain rule does not  
    hold \cite{feizi2014network}. To keep the sum rate constant if we swap $X_1$ and $X_2$, the distortion $\epsilon$ can be scaled by $1/2$. This is because given $X_1$, the function outcome lies either in $\{1,2\}$ or $\{3,4\}$, i.e., $R_2\geq 0$ if $\epsilon>1$. If instead $X_2$ is given, the outcome lies
    either in $\{1,3\}$ or $\{2,4\}$, i.e., $R_1\geq 0$ if $\epsilon>2$. 
\end{itemize}
The weak law of large numbers (WLLN) states that the sample average ${\overline {X}}_{n}=\frac {1}{n}\sum\nolimits_{l=1}^n{X(l)}$ converges in probability towards the expected value, i.e., ${\overline {X}}_{n}\to \mu \,\, {\textrm {as}}\ n\to \infty$. Hence, in hyper binning while for single letter representation it holds that $\mathbb{P}(X_1>\frac{b_1}{a_1})=1-\Phi\left(\frac{b_1}{a_1}\right)$, we observe that $\mathbb{P}({\overline {X_1}}_{n}>\frac{b_1}{a_1})\to \{0,1\}$ as $n\to \infty$. As a result, compressing the length-$n$ source vector provides a more accurate compression. The WLLN is true even if the summands are independent but not identically distributed \cite{Wolpert2016}. For large blocklengths, the rate for the single letter representation of hyper binning is
\begin{align}
\label{cont_example_hyper_binning_unit_length}
h\Big(\mathbb{P}\Big(X_1>\frac{b_1}{a_1}\Big)\Big)+h\Big(\mathbb{P}\Big(X_2>\frac{b_2}{a_2}\Big)\Big).    
\end{align} 
Provided that the sources are uniformly distributed about the planes, we have that $\mathbb{P}\left(X_1>\frac{b_1}{a_1}\right)=\mathbb{P}\left(X_2>\frac{b_2}{a_2}\right)=\frac{1}{2}$, and the sum rate satisfies $1+1=2$. However, this rate is clearly not achievable for finite blocklengths. In the non-asymptotic regime, exploiting the Kolmogorov complexity we can characterize the performance \cite[Ch. 7.3]{cover2012elements}.

In the asymptotic blocklength regime, using $J$ hyperplanes where $J$  properly scales with $n$, and $\{{\bf a}_{ij},\,b_{ij}\}_{j=1}^J$ for sources $i\in\{1,2\}$, the average coding rate for hyper binning is 
\begin{align}
\label{cont_example_hyper_binning_n_length}
\frac{1}{n} \sum\limits_{j=1}^J h\left(\mathbb{P}({\bf a}_{1j}^{\intercal}{\mathbf{X}_1^n}>b_{1j})\right)+
\frac{1}{n} \sum\limits_{j=1}^J
h\left(\mathbb{P}({\bf a}_{2j}^{\intercal}{\mathbf{X}_2^n}>b_{2j})\right)\nonumber\\
\leq \frac{J}{n}\sum\limits_{i=1}^2 h\Big(\frac{1}{J}\sum\limits_{j=1}^J\mathbb{P}({\bf a}_{ij}^{\intercal}{\mathbf{X}_i^n}>b_{ij})\Big),
\end{align}
where the inequality in (\ref{cont_example_hyper_binning_n_length}) follows from the concavity of entropy. The result of $\frac{1}{J}\sum\nolimits_{j=1}^J\mathbb{P}({\bf a}_{ij}^{\intercal}{\mathbf {X}_i^n}>b_{ij})$ is a probability. 

The classical orthogonal binning, i.e., random binning, is such that each sequence is uniformly assigned to one of $2^{nR_1}$ bins where $R_1>H(X_1)$ and bin $\mathcal{B}(m)$ denotes the subset of sequences with the same index $m=1,\dots,2^{nR_1}$. Evaluating the probability $\mathbb{P}({\bf a}_{1j}^{\intercal}{\mathbf{X}_1^n}>b_{1j})$ we obtain 
\begin{align}
\mathbb{P}({\bf a}_{1j}^{\intercal}{\mathbf{X}_1^n}>b_{1j})&=\sum\limits_{\{m:\, {\bf a}_{1j}^{\intercal}{\mathbf{X}_1^n}>b_{1j}\}}\mathbb{P}({\mathbf{X}_1^n}\in\mathcal{B}(m))=\zeta_{1j},\nonumber
\end{align}
where $\mathbb{P}({\mathbf{X}_1^n}\in\mathcal{B}(m))=2^{-nR_1}$, and $\zeta_j$ for a given $j$ represents the fraction of bins such that ${\bf a}_{1j}^{\intercal}{\mathbf{X}_1^n}>b_{1j}$. Hence, 
\begin{align}
\frac{1}{J}\sum\limits_{j=1}^J\mathbb{P}({\bf a}_{1j}^{\intercal}{\mathbf{X}_1^n}>b_{1j})=\frac{1}{J}\sum\limits_{j=1}^J\zeta_{1j}=\mathbb{E}[Z_1], 
\end{align}
where $Z_1=\zeta_{1j}$ with probability $1/J$ for any $j\in\{1,\dots,J\}$. Exploiting the random binning approach, the RHS of (\ref{cont_example_hyper_binning_n_length}) is
\begin{align}
\frac{J}{n}h\left(\mathbb{E}[Z_1]\right)+\frac{J}{n}h\left(\mathbb{E}[Z_2]\right).    
\end{align}
In the case of $J=2$ hyperplanes and uniform probabilities such that $\mathbb{P}({\bf a}_{ij}^{\intercal}{\mathbf{X}_1^n}>b_{1j})=1/2$ for $i\in\{1,2\}$, this yields a sum rate of $Jh\left(\mathbb{E}[Z_1]\right)+Jh\left(\mathbb{E}[Z_2]\right)=J\cdot 1+J\cdot 1=4$ bits (ignoring the scaling with $n$). However, if the distribution is not uniform such that e.g., for each $i\in\{1,2\}$ we have $\mathbb{P}({\bf a}_{ij}^{\intercal}{\mathbf{X}_i^n}>b_{ij})=1/4$ for $j=1$ and $\mathbb{P}({\bf a}_{1j}^{\intercal}{\mathbf{X}_1^n}>b_{1j})=3/4$ for $j=2$, then the LHS of (\ref{cont_example_hyper_binning_n_length}) equals $h(1/4)+h(3/4)+h(1/4)+h(3/4)$. Hence the sum rate is $3.245$ bits, indicating the savings ($0.755$ bits in the asymptotic regime) over classical random binning. 

In the non-asymptotic regime, exploiting (\ref{Kolmogorov_complexity}) we can characterize the encoding rate more precisely.
\end{ex}

%%%%
\section{A Discussion on Computational Information Theory and Comparison with Modular Schemes}
\label{discussion}

In this section, to devise a new perspective on computational information theory, we provide connections between our distributed computationally aware quantization scheme that relies on hyper binning and 
the coloring-based coding models for distributed functional compression. First, in Sect. \ref{coloring_based_coding}, we describe coloring-based modular coding models that decouple coloring from Slepian-Wolf compression. Next, in Sect. \ref{HyperBinning}, we shift our focus to describe an achievable encoding for hyper binning and detail the encoding implementation in 3 steps.

%%%%%%%%%%%%%%%%%%%%%%%%%%%%%%%%%%%%%%%%%%%%%
\subsection{Hyper Binning vs Coloring-based Coding Schemes}
\label{coloring_based_coding}

Since the sources cannot communicate with each other, the only way to rate reduction is through a source's defining its equivalence class for functional compression. We next give a block function example for which codebook trimming followed by the Slepian-Wolf encoding is asymptotically optimal.

\begin{ex}\label{block_function_example}
{\bf A trimmable codebook.}
%Assume that 
Two sources $X_1\independent X_2$ %and 
are uniformly distributed over the alphabets $\mathcal{X}_1=\mathcal{X}_2=\{0,1,2,3\}$. The function is $f(X_1,\,X_2)=X_1\xor X_2$. Note that this function exhibits the behavior as shown in Fig. \ref{fig:MI} (a). Given the function, source $1$ can determine an equivalence class $[x_1]$ which is mapped to $f(x_1,\,X_2)$. Similarly, source $2$ can determine an equivalence class $[x_2]$ mapped to $f(X_1,\,x_2)$. For this model, $[0]=[2]$ and $[1]=[3]$ both for $X_1$ and $X_2$, i.e., each source needs $1$ bit to identify themselves since the data distributions are uniform. However, the entropy of the function is $1$ bit because there are only 2 equally likely classes. 

For computing the function, each source specifies its equivalence class without any help from the other source. To specify its equivalence class $[x_1]$ source $1$ to transmit $R_1=1$ bit. Similar arguments follow for source $2$ and $R_2=1$. Hence, $R_1+R_2=2$. In this example, each equivalence class is equiprobable and has the same size, which is $2$ for each source since the model is symmetric, making the setup more tractable.
\end{ex}

While for a specific class of functions, random binning or orthogonal trimming of the binning-based codebook work,  
we conjecture that such techniques may not optimize the rate region for general functions (even without correlations). However, as authors have shown in \cite{basu2020hypergraph} that for independent sources, the Berger-Tung inner and outer bounds converge, and hence the rate of their hypergraph-based scheme lies between the bounds of \cite{tung1978multiterminal}  
and is optimal for general functions.

For functions with particular structures, e.g., the block function shown in Fig. \ref{fig:MI} (a), we can trim the binning-based codebook, as we detailed in Example \ref{Different_binning_examples}. In general, trimming may not work, e.g., the smooth function in Fig. \ref{fig:MI} (b). We next provide an example where orthogonal binning of a codebook is suboptimal for distributed functional compression.

\begin{ex}\label{BlockCounterEx}
Let $f(X_1,\,X_2)=(X_1\cdot X_2)$ mod $2$ with discrete alphabets $\mathcal{X}_1=\{1,2,3,4\}$ and $\mathcal{X}_2=\{0,1\}$. We infer that 
\begin{align}
f=0&\Rightarrow 
\tilde{x}_1\in\mathcal{X}_1,\,\, \tilde{x}_2=0,\quad \rm{or}\quad
\hat{x}_1\in\{2,4\},\,\, \hat{x}_2=1.\nonumber\\
f=1&\Rightarrow \tilde{x}_1\in\{1,3\},\,\, \hat{x}_2=1,
\end{align}
but $f(3,1)\neq f(2,1)$. We illustrate the source pairs causing distinct outcomes in Fig. \ref{fig:BlockCounter}, indicating that trimming of orthogonal bins may not work even if sources have no correlation.

\begin{figure}[h!]
\centering
\includegraphics[width=0.35\textwidth]{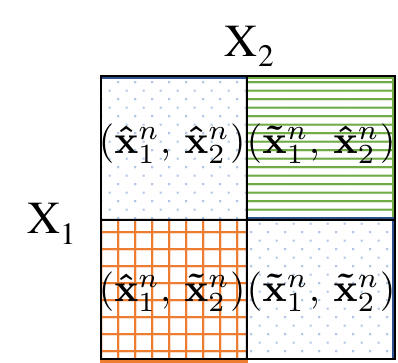}
\caption{\small{Source combinations for computing $f(X_1,\,X_2)$ in Example \ref{BlockCounterEx} for which trimming of orthogonal codebook does not hold. We fill in the source pairs causing different outputs with different patterns.}}\label{fig:BlockCounter}
\end{figure}
\end{ex}

From Example \ref{BlockCounterEx} we conjecture that orthogonal binning is in general not 
efficient when computing general functions and / or with correlated sources. To see that when the decoder observes $f(\mathbf{\hat{x}}_1^n,\mathbf{\hat{x}}_2^n)$, it is possible that $f(\mathbf{\hat{x}}_1^n,\mathbf{\hat{x}}_2^n)=f(\mathbf{\tilde{x}}_1^n,\mathbf{\tilde{x}}_2^n)$ for some source pair $(\mathbf{\tilde{x}}_1^n,\mathbf{\tilde{x}}_2^n)\neq (\mathbf{\hat{x}}_1^n,\mathbf{\hat{x}}_2^n)$. In this case, the bins cannot be combined since $f(\mathbf{\hat{x}}_1^n,\mathbf{\tilde{x}}_2^n)\neq f(\mathbf{\tilde{x}}_1^n,\mathbf{\hat{x}}_2^n)$ in general. Hence, orthogonal binning is clearly suboptimal.

Exploiting the notion of characteristic graphs,  
the authors in \cite{AlonOrlit1996}, \cite{OrlRoc2001} have recently devised coloring-based approaches and used them in characterizing rate bounds in various functional compression setups. We use the notation $H_{G_{X_i}}(X_i)$ to represent the graph entropy for the characteristic graph $G_{X_i}$ that captures the equivalence relation source $X_i$ builds for a given function $f$ on the source random variables $(X_1,\hdots,X_s)$.

\begin{defi}\label{def19FM14} \cite[Defn. 19]{feizi2014network}
A joint-coloring family $V_C= \{v_c^1,\hdots, v_c^l\}$ for $X_i$ with any valid colorings $c_{G_{X_i}}$ for $i=1,\dots,s$ is such that each $v_c^i$, called a joint coloring class, is the set of points $(x_1^{i_1},x_2^{i_2},\hdots,x_s^{i_s})$ whose coordinates have the same color, i.e., $v_c^i =\{ (x_1^{i_1},x_2^{i_2},\hdots,x_s^{i_s}),(x_1^{l_1},x_2^{l_2},\hdots,x_s^{l_s}) : c_{G_{X_1}}(x_1^{i_1}) = c_{G_{X_1}} (x_1^{l_1}),\hdots,c_{G_{X_s}} (x_s^{i_s} ) = c_{G_{X_s}} (x_s^{l_s} )\}$, for any valid $i_1,\hdots,i_s$, and $l_1,\hdots,l_s$.  
$v_c^i$ is connected if between any two points in $v_c^i$, there exists a path that lies in $v_c^i$. 
\end{defi}

For any achievable coloring-based coding scheme, authors in \cite{DosShaMedEff2010} have provided a sufficient condition called the Zig-Zag Condition, and authors in \cite{feizi2014network} both a necessary and sufficient condition called the Coloring Connectivity Condition. These are modular schemes that decouple coloring from Slepian-Wolf compression. We next state the condition in \cite{feizi2014network}.

\begin{defi}\label{def20FM14} \cite[Defn. 20]{feizi2014network} 
Let $X_i$ be random variables with any valid colorings $c_{G_{X_i}}$ for $i=1,\dots, s$. A joint coloring class $v_c^i \in V_C$ satisfies the Coloring Connectivity Condition (CCC) when it is connected, or its disconnected parts have the same function values. Colorings $c_{G_{X_1}}, \hdots, c_{G_{X_s}}$ satisfy CCC when all joint coloring classes satisfy CCC.
\end{defi}

\begin{remark}
{\bf CCC vs orthogonal binning.} CCC ensures the conditions for orthogonal binning, i.e., codebook trimming. A coloring-based encoding that satisfies CCC is applicable to Example \ref{block_function_example}. However, it may be suboptimal for functions not allowing for trimming, see Example \ref{BlockCounterEx}. Let $\mathbf{\tilde{x}}_1^n\in \{1,3\}$ and $\mathbf{\hat{x}}_1^n\in\{2,4\}$ and $\mathbf{\tilde{x}}_2^n=0$ and $\mathbf{\hat{x}}_2^n=1$. Note that $(\mathbf{\hat{x}}_1^n,\mathbf{\hat{x}}_2^n) \sim (\mathbf{\hat{x}}_1^n,\mathbf{\tilde{x}}_2^n)$ and  $(\mathbf{\hat{x}}_1^n,\mathbf{\tilde{x}}_2^n) \sim (\mathbf{\tilde{x}}_1^n,\mathbf{\tilde{x}}_2^n) $ (CCC preserved). However, $(\mathbf{\tilde{x}}_1^n,\mathbf{\tilde{x}}_2^n)\not\sim (\mathbf{\tilde{x}}_1^n,\mathbf{\hat{x}}_2^n)$ (CCC not preserved). Hence, CCC is necessary for trimming. This function also explains the suboptimality of coloring-based coding in general. 
\end{remark}

%%%%%%%%%%%%%%%%%%%%%%%%%%%%%%%%%%%%%%%%%%%%%
\subsection{An Achievable Encoding Scheme for Hyper Binning-based Distributed Function Quantization} 
\label{HyperBinning}
We next provide a high-level abstraction for an achievable encoding of hyper binning with $s=2$ sources. 
For a function $f(X_1,\,X_2)$ known both at the sources and at the destination, let $\{\eta_1,\eta_2,\hdots, \eta_J\} \in \mathcal{H}^2 \subset \mathbb{R}^2$ be the hyperplane arrangement of size $J$ in GP that divides $\mathbb{R}^2$ into exactly $M=r(2,J)$ regions, and is designed to sufficiently quantize $f(X_1,\,X_2)$.  
Our goal is to predetermine the parameters $\{({\bf a}_j,\,b_j)\}_{j=1}^J$ that maximize $I(M)$. 
We assume that these parameters are known at both sources and sent to the destination only once. We also highlight that we provide a heuristic for encoding, instead of explicitly generating codebooks, as we describe next.

{\bf The G{\'a}cs-K{\"o}rner Common Information Carried via Hyperplanes.}     
To enable distributed computation for non-decomposable functions, we envision a helper-based distributed functional compression approach. 
Hyper binning requires the transmission of common randomness between the source data and across the data and its function, captured through the hyperplanes. The common information (CI) measures provide alternate ways of compression for computing when there is common randomness between two  
jointly distributed sources \cite{yu2016generalized,gacs1973common}. Among these measures, the G{\'a}cs-K{\"o}rner CI (GK-CI) has applications in the private constrained synthesis of sources and secrecy \cite{salamatian2016efficient} and is relevant here because it can be separately extracted from either marginal of $X_1$ and $X_2$ \cite{gacs1973common}. In distributed CI extraction, to the best of our knowledge, the GK-CI is the only CI that exploits the combinatorial structure of  
$p_{X_1 ,X_2}$ to decompose the sources into latent common and non-common parts that ideally form disjoint components of a bipartite graph before compression.  
More specifically, the GK-CI decomposition of $p_{X_1,X_2}$ partitions the bipartite graph representation of $p_{X_1,X_2}$ into a set $\mathcal{K}$ of a maximal number of connected components $\mathcal{D}_1,\dots,\mathcal{D}_{|\mathcal{K}|}$ where $|\mathcal{K}|$ is their cardinality. The GK-CI variable $\Kgkw$ represents the index of the connected component and equals $\Kgkw=  \underset{H(U\vert X_1)=H(U\vert X_2 )=0}{\arg\max}H(U)$, i.e., $\Kgkw$ can be separately extracted from either source \cite{gacs1973common}.  
The combinatorial structure of $p_{X_1,X_2}$, captured via $\Kgkw$, can be encoded through a helper as a proxy for establishing bipartitions $\mathcal{K}$, which can provide efficient encoding and transmission of data when joint typicality decoding is not possible \cite{salamatian2016efficient}. 
Letting $\mathbb{P}(\mathcal{D}_k) = \sum\nolimits_{x_1,\,x_2\in \mathcal{D}_k} p_{X_1,X_2}(x_1,x_2)$, the GK-CI between 
$X_1$ and $X_2$  \cite{gacs1973common} equals 
\begin{align}
\label{common}
H(\Kgkw )= -\sum\limits_{k\in \mathcal{K}} \mathbb{P}(\mathcal{D}_k)\log(\mathbb{P}(\mathcal{D}_k))\quad \mbox{bits}.
\end{align} 
In our distributed quantization setting, the helper should communicate in a prescribed order the hyperplane parameters that are $J(n+1)$ in total. The rate of CI is the rate of compressing the parameters $\{({\bf a}_j,\,b_j)\}_{j=1}^J$. While these parameters are real-valued, they have approximate floating-point representations. 
Furthermore, while they might need to be updated with $n$, from (\ref{nmax}), the update rates of $J$ and hence of the hyperplane parameters is logarithmic with respect to $n$.

{\bf Encoding.} 
In encoding each source $X_i$, $i=1,\,2$ independently determines an ordering of hyperplanes to compress $X_i$. Let these orderings be $O_{X_i}\subseteq\pi_{X_i}(\{\eta_1,\eta_2,\hdots, \eta_J\})$, where $\pi_{X_i}$ is the permutation of the hyperplane arrangement from the perspective of source $i$. Note that $\pi_{X_{i_1}}\neq \pi_{X_{i_2}}$ for $i_1\neq i_2$ because sources might build different characteristic graphs. 
Source $i$ determines an ordering $O_{X_i}$, which is from the most informative, i.e., decisive in classifying the source data, to the least such that the first bit provides the maximum reduction in the entropy of the function outcome. 

{\bf Transmission.} 
Because each source has the knowledge of $\{({\bf a}_j,\,b_j)\}_{j=1}^J$, it does the comparisons ${\bf a}_j {\bf x}_t \geq b_j$ for hyperplane $j$ and sends the binary outcomes of these comparisons. Hence, each source needs to send at most $J$ bits ($1$ bit per hyperplane) to indicate the region representing the outcome of $f$. 
There are at most $2^J$ possible configurations, i.e., codewords, among which nearly $|\mathcal{C}|_{\rm HP}=2^{\sum\nolimits_{j=1}^J h(q_j)}$ are typical. Source $i$ transmits a codeword that represents a particular ordering $\pi_{X_i}$. 
Hence, in the proposed scheme with $J$ hyperplanes, we require up to $2J$ bits to describe a function with $M=r(2,J)$ outcomes. This is unlike the Slepian-Wolf setting, where source $i$ has approximately $|\mathcal{C}|_{\rm SW}=2^{n H(X_i)}$ codewords to represent the typical sequences with blocklength $n$ as $n$ goes to infinity \cite{slepian1973noiseless}. 
Hence, an advantage of the hyper binning scheme over the scheme of Slepian-Wolf is that it can capture the growing blocklength $n$ with $J$ hyperplanes without exceeding an expected distortion. 
Note that as hyper binning captures the correlation between the sources as well as between the sources and the function, it provides a representation with a reduced codebook size $|\mathcal{C}|_{\rm HP}<|\mathcal{C}|_{\rm SW}$ for distributed functional compression. If using $J\ll n$ hyperplanes ensures that the majority of $q_j$ is in $\{0,1\}$, then the efficiency of the function representation is obvious. However, if $J$ linearly scales with $n$, since $H_{G_{X_i}}(X_i)$ is the entropy of the characteristic graph that source $i$ builds to distinguish the outcomes of $f$ \cite{korner1973coding}, a sufficient condition for $\sum\nolimits_{j=1}^J h(q_j)\approx n H_{G_{X_i}}(X_i)$ is that $h(q_j)\approx \frac{n}{J}H_{G_{X_i}}(X_i)$, $\forall$ $j$.

{\bf Reception.} 
At the destination, each codeword pair received from the sources yields a distinct function output that can be determined by the specific order of the received bits in the codebooks designed for evaluating the outcome of $f$ along with the CI carried via the hyperplanes.

{\bf Discussion.} Sects. \ref{coloring_based_coding} and \ref{HyperBinning} focus on achievable schemes and are suboptimal in some cases. However,  hyper binning is not modular, unlike the coloring-based approaches, e.g., graph coloring followed by Slepian-Wolf compression in  \cite{feizi2014network} or its hypergraph-based extension in \cite{basu2020hypergraph}. Hyper binning does not involve a coloring step or a separate quantization phase prior to  
compression.  
Instead, it jointly performs 
quantization and compression.  
This joint design is possible through the knowledge of the hyperplane parameters at the source sites.

%%%%%%%%%%%%%%%%%%%%%%%%%%%%%%%%%%%%%%%%%%%%%%%%%%%%%%%%%%%%%%%%%%%%%%%%%%%%%%%%%%%%%%%%%%%%%%%%%%%%%%%%%%%%%%%%%%%%%%%%%%%%%%%%%%
\section{Conclusions}
\label{conclusion}
We introduced a distributed function-aware quantization scheme for distributed functional compression called hyper binning. While distributed source compression algorithms in general focus on quantizing continuous variables and then compressing them, hyper binning does the compression step on the functional representation, providing a natural generalization of orthogonal binning to computation. 
Optimizing the tradeoff between the number of hyperplanes and the blocklength is crucial in exploiting the high dimensional data, especially in a finite blocklength setting. The proposed model can adapt to the changes and learn from data by successively fine-tuning the hyperplane parameters with the growing data size. Due to Kolmogorov complexity, for finite blocklengths, hyper binning can be iteratively refined to capture the function accurately at a lower cost than random binning. We believe that our approach provides a fresh perspective to vector quantization for computing. 
However, we do not claim optimality. This caveat is due to the difficulty of the NP-completeness of graph entropy and practical implementation because there is no constructive algorithm. Our future work includes sampling and vector quantization for function computation from an information-theoretic standpoint. 
Extensions also include analyzing general convex bodies formed by nonlinear hyperplanes, hypersurfaces, and multivariate functions.

%%%%%%%%%%%%%%%%%%%%%%%%%%%%%%%

\section*{Acknowledgment}
Authors gratefully acknowledge the constructive feedback from Dr. Cohen, Dr. Salamatian and the anonymous reviewers.

\begin{spacing}{1}
\bibliographystyle{IEEEtran}
\bibliography{references}

% Generated by IEEEtran.bst, version: 1.14 (2015/08/26)
\begin{thebibliography}{10}
\providecommand{\url}[1]{#1}
\csname url@samestyle\endcsname
\providecommand{\newblock}{\relax}
\providecommand{\bibinfo}[2]{#2}
\providecommand{\BIBentrySTDinterwordspacing}{\spaceskip=0pt\relax}
\providecommand{\BIBentryALTinterwordstretchfactor}{4}
\providecommand{\BIBentryALTinterwordspacing}{\spaceskip=\fontdimen2\font plus
\BIBentryALTinterwordstretchfactor\fontdimen3\font minus
  \fontdimen4\font\relax}
\providecommand{\BIBforeignlanguage}[2]{{%
\expandafter\ifx\csname l@#1\endcsname\relax
\typeout{** WARNING: IEEEtran.bst: No hyphenation pattern has been}%
\typeout{** loaded for the language `#1'. Using the pattern for}%
\typeout{** the default language instead.}%
\else
\language=\csname l@#1\endcsname
\fi
#2}}
\providecommand{\BIBdecl}{\relax}
\BIBdecl

\bibitem{malakmedard2020}
D.~Malak and M.~M{\'e}dard, ``Hyper binning for distributed function coding,''
  in \emph{Proc., IEEE SPAWC}, May 2020.

\bibitem{slepian1973noiseless}
D.~Slepian and J.~Wolf, ``Noiseless coding of correlated information sources,''
  \emph{IEEE Trans. Inf. Theory}, vol.~19, no.~4, pp. 471--480, 1973.

\bibitem{wyner1976rate}
A.~Wyner and J.~Ziv, ``The rate-distortion function for source coding with side
  information at the decoder,'' \emph{IEEE Trans. Inf. Theory}, vol.~22, no.~1,
  pp. 1--10, 1976.

\bibitem{PradRam2003}
S.~S. Pradhan and K.~Ramchandran, ``Distributed source coding using syndromes
  ({DISCUS}): design and construction,'' \emph{IEEE Trans. Inf. Theory},
  vol.~49, no.~3, pp. 626--643, Mar. 2003.

\bibitem{coleman2006low}
T.~P. Coleman, A.~H. Lee, M.~M{\'e}dard, and M.~Effros, ``Low-complexity
  approaches to {Slepian--Wolf} near-lossless distributed data compression,''
  \emph{IEEE Trans. Inf. Theory}, vol.~52, no.~8, pp. 3546--61, Jul. 2006.

\bibitem{wang2001design}
X.~Wang and M.~T. Orchard, ``Design of trellis codes for source coding with
  side information at the decoder,'' in \emph{Proc., IEEE Data Compression
  Conference (DCC)}, Snowbird, UT, Mar. 2001, pp. 361--370.

\bibitem{bajcsy2001coding}
J.~Bajcsy and P.~Mitran, ``Coding for the {Slepian-Wolf} problem with turbo
  codes,'' in \emph{Proc., IEEE Globecom}, San Antonio, TX, Nov. 2001, pp.
  1400--1404.

\bibitem{korner1973coding}
J.~K{\"o}rner, ``Coding of an information source having ambiguous alphabet and
  the entropy of graphs,'' in \emph{Proc., 6th Prague Conf. Inf. Theory},
  Prague, Czech Republic, Sep. 1973, pp. 411--425.

\bibitem{AlonOrlit1996}
N.~Alon and A.~Orlitsky, ``Source coding and graph entropies,'' \emph{IEEE
  Trans. Inf. Theory}, vol.~42, no.~5, pp. 1329--39, Sep. 1996.

\bibitem{OrlRoc2001}
A.~Orlitsky and J.~R. Roche, ``Coding for computing,'' \emph{IEEE Trans. Inf.
  Theory}, vol.~47, no.~3, pp. 903--17, Mar. 2001.

\bibitem{DosShaMedEff2010}
V.~Doshi, D.~Shah, M.~M\'{e}dard, and M.~Effros, ``Functional compression
  through graph coloring,'' \emph{IEEE Trans. Inf. Theory}, vol.~56, Aug. 2010.

\bibitem{feizi2014network}
S.~Feizi and M.~M{\'e}dard, ``On network functional compression,'' \emph{IEEE
  Trans. Inf. Theory}, vol.~60, no.~9, pp. 5387--5401, Jun. 2014.

\bibitem{FES04}
H.~Feng, M.~Effros, and S.~Savari, ``Functional source coding for networks with
  receiver side information,'' in \emph{Proc., IEEE Allerton Conf. Comm.,
  Control and Comput.}, Monticello, IL, Sep. 2004, pp. 1419--27.

\bibitem{Gal88}
R.~Gallager, ``Finding parity in a simple broadcast network,'' \emph{IEEE
  Trans. Inf. Theory}, vol.~34, no.~2, pp. 176--180, Mar. 1988.

\bibitem{LiAliYuAves2018}
S.~Li, M.~A. Maddah-Ali, Q.~Yu, and A.~S. Avestimehr, ``A fundamental tradeoff
  between computation and communication in distributed computing,'' \emph{IEEE
  Trans. Inf. Theory}, vol.~64, pp. 109--128, Jan. 2018.

\bibitem{YuAliAves2018}
Q.~Yu, M.~A. Maddah-Ali, and A.~S. Avestimehr, ``The exact rate-memory tradeoff
  for caching with uncoded prefetching,'' \emph{IEEE Trans. Inf. Theory},
  vol.~64, no.~2, pp. 1281--96, Feb. 2018.

\bibitem{KowKum2012}
H.~Kowshik and P.~Kumar, ``Optimal function computation in directed and
  undirected graphs,'' \emph{IEEE Trans. Inf. Theory}, vol.~58, no.~6, pp.
  3407--3418, Jun. 2012.

\bibitem{HuanTanYangGua2018}
C.~Huang, Z.~Tan, S.~Yang, and X.~Guang, ``Comments on cut-set bounds on
  network function computation,'' \emph{IEEE Trans. Inf. Theory}, vol.~64,
  no.~9, pp. 6454--6459, Apr. 2018.

\bibitem{AppusFran2014}
R.~Appuswamy and M.~Franceschetti, ``Computing linear functions by linear
  coding over networks,'' \emph{IEEE Trans. Inf. Theory}, vol.~60, no.~1, pp.
  422--431, Jan. 2014.

\bibitem{koetter2003algebraic}
R.~Koetter and M.~M{\'e}dard, ``An algebraic approach to network coding,''
  \emph{IEEE/ACM Trans. Netw.}, vol.~11, no.~5, pp. 782--795, Oct. 2003.

\bibitem{LiYeuCai2003}
S.-Y.~R. Li, R.~W. Yeung, and N.~Cai, ``Linear network coding,'' \emph{IEEE
  Trans. Inf. Theory}, vol.~49, no.~2, pp. 371--381, Feb. 2003.

\bibitem{HoMedKoeKarEffShiLeo2006}
T.~Ho, M.~M\'{e}dard, R.~Koetter, D.~Karger, M.~Effros, J.~Shi, and B.~Leong,
  ``A random linear network coding approach to multicast,'' \emph{IEEE Trans.
  Inf. Theory}, vol.~52, pp. 4413--30, Oct. 2006.

\bibitem{SheSutTri2018}
L.~Shen, B.~Suter, and E.~Tripp, ``Structured sparsity promoting functions,''
  \emph{J. Optim. Theory App.}, vol. 183, no.~2, pp. 386--421, Nov. 2019.

\bibitem{delgosha2019notion}
P.~Delgosha and V.~Anantharam, ``A notion of entropy for stochastic processes
  on marked rooted graphs,'' \emph{arXiv preprint arXiv:1908.00964}, Aug. 2019.

\bibitem{delgosha2018distributed}
------, ``Distributed compression of graphical data,'' in \emph{Proc., IEEE
  Int. Symp. Inf. Theory}, Vail, Colorado, Jun. 2018, pp. 2216--2220.

\bibitem{padmanabhan1999partitioning}
M.~Padmanabhan, L.~R. Bahl, and D.~Nahamoo, ``Partitioning the feature space of
  a classifier with linear hyperplanes,'' \emph{IEEE Trans. Speech and Audio
  Proc.}, vol.~7, no.~3, pp. 282--288, May 1999.

\bibitem{ribeiro2006bandwidthI}
A.~Ribeiro and G.~B. Giannakis, ``Bandwidth-constrained distributed estimation
  for wireless sensor networks-part {I}: Gaussian case,'' \emph{IEEE Trans.
  Signal Proc.}, vol.~54, no.~3, pp. 1131--1143, Feb. 2006.

\bibitem{max1960quantizing}
J.~Max, ``Quantizing for minimum distortion,'' \emph{IRE Trans. Inf. Theory},
  vol.~6, no.~1, pp. 7--12, Mar. 1960.

\bibitem{fang2009hyperplane}
J.~Fang and H.~Li, ``Hyperplane-based vector quantization for distributed
  estimation in wireless sensor networks,'' \emph{IEEE Trans. Inf. Theory},
  vol.~55, no.~12, pp. 5682--5699, Nov. 2009.

\bibitem{ribeiro2006bandwidthII}
A.~Ribeiro and G.~B. Giannakis, ``Bandwidth-constrained distributed estimation
  for wireless sensor networks-part {II}: unknown probability density
  function,'' \emph{IEEE Trans. Signal Proc.}, vol.~54, no.~7, pp. 2784--2796,
  Jun. 2006.

\bibitem{abbe2007finding}
E.~Abbe, M.~M{\'e}dard, S.~Meyn, and L.~Zheng, ``Finding the best mismatched
  detector for channel coding and hypothesis testing,'' in \emph{Proc., IEEE
  Inf. Theory and App. Workshop}, San Diego, CA, Jan.-Feb. 2007, pp. 284--288.

\bibitem{huang2004error}
J.~Huang, S.~Meyn, and M.~M{\'e}dard, ``Error exponents for channel coding and
  signal constellation design,'' in \emph{Proc., IEEE ISIT}, Chicago, Illinois,
  Jun. 2004, pp. 478--478.

\bibitem{shlezinger2019hardware}
N.~Shlezinger, Y.~C. Eldar, and M.~R. Rodrigues, ``Hardware-limited task-based
  quantization,'' \emph{IEEE Trans. Signal Process.}, vol.~67, no.~20, pp.
  5223--5238, Aug. 2019.

\bibitem{salamatian2019task}
S.~Salamatian, N.~Shlezinger, Y.~C. Eldar, and M.~M{\'e}dard, ``Task-based
  quantization for recovering quadratic functions using principal inertia
  components,'' in \emph{Proc., IEEE ISIT}, Paris, France, Jul. 2019, pp.
  390--94.

\bibitem{cohen2019serial2}
A.~Cohen, N.~Shlezinger, Y.~C. Eldar, and M.~M{\'e}dard, ``Serial quantization
  for representing sparse signals,'' in \emph{Proc., IEEE Allerton Conf. Comm.,
  Control and Comput.}, Monticello, IL, Sep. 2019, pp. 987--994.

\bibitem{basu2020hypergraph}
S.~Basu, D.~Seo, and L.~R. Varshney, ``Hypergraph-based coding schemes for two
  source coding problems under maximal distortion,'' in \emph{Proc., IEEE
  ISIT}, Los Angeles, CA, Jun. 2020, pp. 2426--2431.

\bibitem{servetto2005achievable}
S.~D. Servetto, ``Achievable rates for multiterminal source coding with scalar
  quantizers,'' in \emph{Proc., IEEE Asilomar}, Pacific Grove, CA, Oct. 2005,
  pp. 1762--1766.

\bibitem{misra2011distributed}
V.~Misra, V.~K. Goyal, and L.~R. Varshney, ``Distributed scalar quantization
  for computing: {High}-resolution analysis and extensions,'' \emph{IEEE Trans.
  Inf. Theory}, vol.~57, no.~8, pp. 5298--5325, Jul. 2011.

\bibitem{berger1979upper}
T.~Berger, K.~Housewright, J.~Omura, S.~Yung, and J.~Wolfowitz, ``An upper
  bound on the rate distortion function for source coding with partial side
  information at the decoder,'' \emph{IEEE Trans. Inf. Theory}, vol.~25, no.~6,
  pp. 664--666, Nov. 1979.

\bibitem{barros2003rate}
J.~Barros and S.~D. Servetto, ``On the rate-distortion region for separate
  encoding of correlated sources,'' in \emph{Proc., IEEE ISIT}, Yokohama,
  Japan, Jun. 2003, p. 171.

\bibitem{wagner2008rate}
A.~B. Wagner, S.~Tavildar, and P.~Viswanath, ``Rate region of the quadratic
  gaussian two-encoder source-coding problem,'' \emph{IEEE Trans. Inf. Theory},
  vol.~54, no.~5, pp. 1938--1961, Apr. 2008.

\bibitem{yamamoto1982wyner}
H.~Yamamoto, ``{Wyner-Ziv} theory for a general function of the correlated
  sources,'' \emph{IEEE Trans. Inf. Theory}, vol.~28, no.~5, pp. 803--7, Sep.
  1982.

\bibitem{basu2020functional}
S.~Basu, D.~Seo, and L.~R. Varshney, ``Functional epsilon entropy,'' in
  \emph{Proc., IEEE Data Compression Conference}, Mar. 2020, pp. 332--341.

\bibitem{watanabe2013rate}
S.~Watanabe, ``The rate-distortion function for product of two sources with
  side-information at decoders,'' \emph{IEEE Trans. Inf. Theory}, vol.~59,
  no.~9, pp. 5678--5691, Jun. 2013.

\bibitem{rajesh2008distributed}
R.~Rajesh, V.~K. Varshneya, and V.~Sharma, ``Distributed joint source channel
  coding on a multiple access channel with side information,'' in \emph{Proc.,
  IEEE ISIT}, Toronto, Canada, Jul. 2008, pp. 2707--2711.

\bibitem{nazer2007computation}
B.~Nazer and M.~Gastpar, ``Computation over multiple-access channels,''
  \emph{IEEE Trans. Inf. Theory}, vol.~53, no.~10, pp. 3498--3516, Sep. 2007.

\bibitem{gu2009achievable}
W.~Gu, \emph{On achievable rate regions for source coding over networks}.\hskip
  1em plus 0.5em minus 0.4em\relax California Institute of Technology, 2009.

\bibitem{linder1999high}
T.~Linder, R.~Zamir, and K.~Zeger, ``High-resolution source coding for
  non-difference distortion measures: {Multidimensional} companding,''
  \emph{IEEE Trans. Inf. Theory}, vol.~45, no.~2, pp. 548--561, Mar. 1999.

\bibitem{bucklew1984multidimensional}
J.~Bucklew, ``Multidimensional digitization of data followed by a mapping,''
  \emph{IEEE Trans. Inf. Theory}, vol.~30, no.~1, pp. 107--110, Jan. 1984.

\bibitem{thao1996lower}
N.~T. Thao and M.~Vetterli, ``Lower bound on the mean-squared error in
  oversampled quantization of periodic signals using vector quantization
  analysis,'' \emph{IEEE Trans. Inf. Theory}, vol.~42, pp. 469--479, Mar. 1996.

\bibitem{goyal1998quantized}
V.~K. Goyal, M.~Vetterli, and N.~T. Thao, ``Quantized overcomplete expansions
  in ir/sup n: analysis, synthesis, and algorithms,'' \emph{IEEE Trans. Inf.
  Theory}, vol.~44, no.~1, pp. 16--31, Jan. 1998.

\bibitem{goyal2001quantized}
V.~K. Goyal, J.~Kova{\v{c}}evi{\'c}, and J.~A. Kelner, ``Quantized frame
  expansions with erasures,'' \emph{Appl. Comput. Harmon. Anal.}, vol.~10,
  no.~3, pp. 203--233, May 2001.

\bibitem{chataignon2019comparison}
J.~Chataignon and S.~Rini, ``Comparison-limited vector quantization,'' in
  \emph{Proc., IEEE Asilomar}, Nov. 2019, pp. 1035--1039.

\bibitem{cover2012elements}
T.~M. Cover and J.~A. Thomas, \emph{Elements of Information Theory}.\hskip 1em
  plus 0.5em minus 0.4em\relax John Wiley \& Sons, 2012.

\bibitem{shannon1948mathematical}
C.~E. Shannon, ``A mathematical theory of communication,'' \emph{Bell Syst.
  Tech. J.}, vol.~27, no.~3, pp. 379--423, Jul. 1948.

\bibitem{liu2006slepian}
Z.~Liu, S.~Cheng, A.~D. Liveris, and Z.~Xiong, ``{Slepian-Wolf} coded nested
  lattice quantization for {Wyner-Ziv} coding: High-rate performance analysis
  and code design,'' \emph{IEEE Trans. Inf. Theory}, vol.~52, no.~10, pp.
  4358--4379, Sep. 2006.

\bibitem{wyner1978rate}
A.~D. Wyner, ``The rate-distortion function for source coding with side
  information at the decoder$\backslash$3-ii: General sources,''
  \emph{Information and control}, vol.~38, no.~1, pp. 60--80, Jul. 1978.

\bibitem{cardinal2004minimum}
J.~Cardinal, S.~Fiorini, and G.~Van~Assche, ``On minimum entropy graph
  colorings,'' in \emph{Proc., IEEE ISIT}, Chicago, Illinois, Jun.-Jul. 2004,
  p.~43.

\bibitem{feige1998threshold}
U.~Feige, ``A threshold of ln n for approximating set cover,'' \emph{J. ACM},
  vol.~45, no.~4, pp. 634--652, Jul. 1998.

\bibitem{karp1972reducibility}
R.~M. Karp, \emph{Reducibility Among Combinatorial Problems}.\hskip 1em plus
  0.5em minus 0.4em\relax Plenum Press, New York, 1972.

\bibitem{eppstein2000improved}
D.~Eppstein, ``Improved algorithms for 3-coloring, 3-edge-coloring, and
  constraint satisfaction,'' in \emph{Proc., ACM-SIAM Symp. Discrete
  Algorithms}, vol. 103, Washington, DC, Jan. 2001, p. 329.

\bibitem{courtade2011multiterminal}
T.~A. Courtade and R.~D. Wesel, ``Multiterminal source coding with an
  entropy-based distortion measure,'' in \emph{Proc., IEEE Int. Symp. Inf.
  Theory}, Saint-Petersburg, Russia, Jul. 2011, pp. 2040--2044.

\bibitem{gacs1973common}
P.~G{\'a}cs and J.~K{\"o}rner, ``Common information is far less than mutual
  information,'' \emph{Problems of Control and Information Theory}, vol.~2,
  no.~2, pp. 149--162, Jan. 1973.

\bibitem{dattorro2010convex}
J.~Dattorro, \emph{Convex Optimization \& Euclidean Distance Geometry}.\hskip
  1em plus 0.5em minus 0.4em\relax Meboo Publishing, 2005.

\bibitem{kolmogorov1957representation}
A.~N. Kolmogorov, ``On the representation of continuous functions of many
  variables by superposition of continuous functions of one variable and
  addition,'' in \emph{Proc., Doklady Akademii Nauk}, vol. 114, no.~5.\hskip
  1em plus 0.5em minus 0.4em\relax Russian Academy of Sciences, 1957, pp.
  953--956.

\bibitem{arnolddessert}
D.~Bar-Natan, ``{Hilbert’s} 13th problem, in full colour,'' Nov. 2014.

\bibitem{braun2009constructive}
J.~Braun and M.~Griebel, ``On a constructive proof of {Kolmogorov's}
  superposition theorem,'' \emph{Constructive Approximation}, vol.~30, no.~3,
  p. 653, Dec. 2009.

\bibitem{tung1978multiterminal}
S.-Y. Tung, ``Multiterminal source coding,'' Cornell Univ., May 1978.

\bibitem{berger1978multiterminal}
T.~Berger, ``Multiterminal source coding,'' \emph{The Inf. Theory Approach to
  Commun.}, 1978.

\bibitem{miao2010projection}
L.~Miao, Y.~Wenyu, and Z.~Xiaoping, ``Projection on convex set and its
  application in testing force closure properties of robotic grasping,'' in
  \emph{Proc., Int. Conf. Intelligent Robotics and Apps.}\hskip 1em plus 0.5em
  minus 0.4em\relax Shanghai, China: Springer, Nov. 2010, pp. 240--251.

\bibitem{Iosif12}
A.~Iosif, X.~Ding, and Y.~Yu, ``Lecture notes in optimization,'' UC Berkeley
  EECS Dept., 2012.

\bibitem{Mazumdar2017}
A.~Mazumdar, ``{COMPSCI 690T Lecture notes in Coding Theory and
  Applications},'' UMASS CS Dept., Feb. 2017.

\bibitem{duda2006pattern}
R.~O. Duda, P.~E. Hart \emph{et~al.}, \emph{Pattern Classification}.\hskip 1em
  plus 0.5em minus 0.4em\relax John Wiley \& Sons, 2006.

\bibitem{cover1975proof}
T.~Cover, ``A proof of the data compression theorem of {Slepian and Wolf} for
  ergodic sources,'' \emph{IEEE Trans. Inf. Theory}, vol.~21, no.~2, pp.
  226--228, Mar. 1975.

\bibitem{kostina2012fixed}
V.~Kostina and S.~Verd{\'u}, ``Fixed-length lossy compression in the finite
  blocklength regime,'' \emph{IEEE Trans. Inf. Theory}, vol.~58, no.~6, pp.
  3309--3338, Feb. 2012.

\bibitem{Wolpert2016}
R.~L. Wolpert, ``Sta 711: Probability \& measure theory,'' Duke Statistics
  Dept., Feb. 2017.

\bibitem{yu2016generalized}
L.~Yu, H.~Li, and C.~W. Chen, ``Generalized common informations: Measuring
  commonness by the conditional maximal correlation,'' \emph{arXiv preprint
  arXiv:1610.09289}, Oct. 2016.

\bibitem{salamatian2016efficient}
S.~Salamatian, A.~Cohen, and M.~M{\'e}dard, ``Efficient coding for multi-source
  networks using {G{\'a}cs-K{\"o}rner} common information,'' in \emph{Proc.,
  IEEE ISITA}, Monterey, CA, Oct.–Nov. 2016, pp. 166--170.

\end{thebibliography}
\end{spacing}

\newpage
\setcounter{page}{1}
%%%%%%
%\appendix
\section*{Supplementary Material}

We recall the following notation being used in the paper: $\bar{h}_M=\frac{1}{M}\sum\nolimits_{k=1}^{M}  h(p_k)$ and $\bar{h}_{M+1}=\frac{1}{M+1}\sum\nolimits_{k=1}^{M+1}h(p_k)$, and hence $(M+1)\bar{h}_{M+1}=M\bar{h}_M+h(p_{M+1})$.

%%%%
Given an MSE distortion criterion $\mathbb{E}[d({\bf f^n},{\bf \hat{f}^n})]=\frac{1}{n}\sum\nolimits_{l=1}^n (f(l) - \hat{f}(l))^2\leq\epsilon$, the approximation using $J$ hyperplanes yields an 
MSE:
\begin{align}
\label{MSE_hyperplanes_samev2}
\mathbb{E}\big[\big(\sum\limits_{j=J}^{2n}{{c_j}_{\{{\bf a}_j^{\intercal} {\bf x}_t \geq b_j\}}-{d_j}_{\{{\bf a}_j^{\intercal} {\bf x}_t<b_j\}}}\big)^2\big]\leq\epsilon.
\end{align}

%%%
\subsection{Proof of Proposition \ref{MI_lowerbound}}
\label{App:MI_lowerbound}
Adding one more hyperplane, $p_k$'s decay and for given $N$, $n_k$'s also decrease but since the source data is preserved, we have $\sum\nolimits_{k=1}^M n_k=\sum\nolimits_{k=1}^{M+1} \tilde{n}_k=\sum\nolimits_{k=1}^{M+1} \alpha n_k=N$ where $\alpha\in[0,1]$. Letting $\bar{\alpha}=1-\alpha$, the following holds for $I(M+1)$:
\begin{align}
I(M+1)&= h\Big(\frac{1}{N}\sum\limits_{k=1}^{M+1} \tilde{n}_k p_k\Big) - \sum\limits_{k=1}^{M+1} \frac{\tilde{n}_k}{N} h(p_k)\nonumber\\
&=h\Big(\frac{1}{N}\sum\limits_{k=1}^{M} \alpha n_k p_k +\bar{\alpha}p_{M+1}\Big)-\alpha\sum\limits_{k=1}^{M} \gamma_k h(p_k)-\bar{\alpha}h(p_{M+1})\nonumber\\
&\overset{(a)}{\geq} \alpha h\Big(\sum\limits_{k=1}^M \gamma_k p_k\Big)+\bar{\alpha}h(p_{M+1})-\bar{\alpha}h(p_{M+1})-\sum\limits_{k=1}^{M} \alpha\gamma_k h(p_k),\nonumber
\end{align}
where $(a)$ is due to the concavity of $h$. The RHS of $(a)$ is $\alpha I(M)$. 
For asymmetric data distribution, we have $n_k = \beta p_k$. The following relation confirms the monotonicity of $I(M)$:
\begin{align}
 I(M+1)  \geq \alpha h\Big(\frac{\beta}{N}\sum\limits_{k=1}^{M}  p_k^2 \Big)- \sum\limits_{k=1}^M \frac{\alpha\beta p_k}{N} h(p_k)
 \overset{(b)}{=}\alpha I(M),\nonumber
\end{align}
where $(b)$ follows from the definition of $I(M)$ that yields $I(M)= h\Big(\frac{\beta}{N}\sum\nolimits_{k=1}^M  p_k^2\Big) - \sum\nolimits_{k=1}^M \frac{\beta p_k}{N} h(p_k)$. As $I(M+1)\geq \alpha I(M)$ where $\alpha\in [0,1]$, the final result can be obtained.

%%%%
\subsection{Proof of Proposition \ref{Bounding_MI_symmetric}}
\label{App:Bounding_MI_symmetric}
The mutual information $I(M+1)$ satisfies the relation:
\begin{align}
I(M+1)
&=h\Big(\frac{M\bar{p}_M+p_{M+1}}{M+1}\Big) - \bar{h}_{M+1}\nonumber\\
&\geq \frac{M}{M+1}h(\bar{p}_M)+ \frac{1}{M+1}h(p_{M+1})-\frac{M\bar{h}_M+h(p_{M+1})}{M+1}\nonumber\\
&=\frac{M}{M+1}\left(h(\bar{p}_M)-\bar{h}_M\right)=\frac{M}{M+1}I(M),\nonumber
\end{align}
where the inequality is due to the concavity of $h$. 

Given $M$, assume $\{p_k\}_{k=1}^M$ are fixed and in the increasing order $1/2<p_1<p_2<\hdots<p_M$ and hence $\bar{h}_M$. When we increment $M$, since $h\left(\frac{M\bar{p}_M+p_{M+1}}{M+1}\right)\leq h\left(\bar{p}_M\right)$,
\begin{align}
I(M+1)
&\leq \frac{M}{M+1}\left(h\left(\bar{p}_M\right)-\bar{h}_M\right)
+\frac{h\left(\bar{p}_M\right)-h(p_{M+1})}{M+1}\nonumber\\
&=\frac{M\, I(M)}{M+1}+\frac{h\left(\bar{p}_M\right)-h(p_{M+1})}{M+1}
=I(M)+\frac{\bar{h}_M-h(p_{M+1})}{M+1}.\nonumber
\end{align}
Combining the bounds we attain the desired result.

%%%% 
\subsection{Proof of Proposition \ref{MI_symmetric}}
\label{App:MI_symmetric}
For the convergence argument, from (\ref{MI_bounds_difference}), taking a sum from $M=1$ to $N-1$, we have that
\begin{align}
\sum\limits_{M=1}^{N-1} \frac{\bar{h}_M-h\left(\bar{p}_{M}\right)}{M+1}\leq I(N)-I(1) 
\leq 
\sum\limits_{M=1}^{N-1}\frac{\bar{h}_M-h(p_{M+1})}{M+1}.\nonumber
\end{align}
From the law of large numbers, %(LLN), 
$\bar{h}_M\to \mathbb{E}[h]=0$ as $M\to \infty$ and $h(p_M)\to 0$, i.e., the sequence $\{\frac{\bar{h}_M-h(p_{M+1})}{M+1}\}$ converges to $0$. It is indeed a Cauchy sequence. Note that a sequence $x_1, x_2, x_3, \ldots$  of real numbers is called a Cauchy sequence if, for every positive real number $\varepsilon$, there is a positive integer $N$ such that for all natural numbers $m, n > N$, $|x_m - x_n| < \varepsilon$. Hence, it is convergent.

If $n_k$'s are symmetric, $I(M)$  has the behavior, as shown in Fig. \ref{fig:MI} (d). The decay rates are low if $M$ is large. However, if $M$ is small, we expect the first term to decrease slower (concavity), yielding high mutual information. As $M$ gets larger, the decrease in the first term is sharper, and the mutual information decays, which we formally investigate next:
\begin{align}
\Delta I(M+1) 
=\frac{1}{M+1}\left(\left(\bar{h}_M-\bar{p}_M\right)-\left(h\left(p_{M+1}\right)-p_{M+1}\right)\right).\nonumber
\end{align}
Since $h(p)-p$ is decreasing in $p$ for $p\geq 1/2$, we have that $h(\bar{p}_M)-\bar{p}_M \geq h(\bar{p}_{M+1})-\bar{p}_{M+1}$. However, because entropy is concave, i.e., $h(\bar{p}_M)-\bar{p}_M\geq \bar{h}_M-\bar{p}_M$ for all $M$, this does not imply that $\bar{h}_M-\bar{p}_M>h\left(p_{M+1}\right)-p_{M+1}$ for all $M$. When $M$ is small, the gap $h(\bar{p}_M)-\bar{h}_M$ is smaller and it is possible to have $\Delta I(M+1) \geq 0$. However, when $M$ gets larger, the gap $h(\bar{p}_M)-\bar{h}_M$ is larger and $\Delta I(M+1) < 0$. 

There is a global maximum $I(M^*)$ such that  $\Delta I(M+1)\approx 0$. This is true when $h\left(p_{M+1}\right)-p_{M+1} \approx \bar{h}_M-\bar{p}_M$. The value $M^*$ is unique since as $M>M^*$, the relative increase of $p_{M+1}$ is more than $\bar{p}_M$, and the relative decrease of $h\left(p_{M+1}\right)$ with respect to $h\left(\bar{p}_M\right)$ is higher and $\bar{h}_M$ is smaller than $h\left(\bar{p}_M\right)$.

%%%%
\subsection{Proof of Proposition \ref{MSE_sufficient_condition}}
\label{App:MSE_sufficient_condition}
Noting that ${\bf a}_j^{\intercal} {\mathbf{x}_1^n}$ is Gaussian distributed, and $c_j$ should be its average value such that ${\bf a}_j^{\intercal} {\mathbf{x}_1^n}\geq b_j$, and similarly for $d_j$ which is the average of ${\bf a}_j^{\intercal} {\mathbf{x}_1^n}$ such that ${\bf a}_j^{\intercal} {\mathbf{x}_1^n}<b_j$. In other words, the following relationships hold for $j\in\mathcal{J}_i,\,\, i\in\{1,2\}$:
\begin{align}
\label{c_eqn}
c_j&=\mathbb{E}[{\bf a}_j^{\intercal} {\mathbf{x}_i^n}\,\vert\, {\bf a}_j^{\intercal} {\mathbf{x}_i^n}\geq b_j]=\frac{\phi(b_j)}{1-\Phi(b_j)},\\
\label{d_eqn}
d_j&=\mathbb{E}[{\bf a}_j^{\intercal} {\mathbf{x}_i^n}\,\vert\, {\bf a}_j^{\intercal} {\mathbf{x}_i^n}<b_j]=\frac{\phi(b_j)}{\Phi(b_j)},
\end{align}
where $\phi(x)$ is the density function of the standard normal distribution and $Q(x)=1-\Phi(x)$ where the $Q$-function is the tail distribution function of the standard normal distribution. Note that the ratio of the parameters satisfy $\frac{c_j}{d_j}=\frac{\Phi(b_j)}{1-\Phi(b_j)}$.

We can evaluate the relation in (\ref{MSE_hyperplanes_samev2}) via incorporating the definition $q_j=\mathbb{P}({\bf a}_j^{\intercal} {\bf x}_t\geq b_j)$ as
\begin{align}
\label{q_j_variable}
q_j=Q\left(\frac{b_j-\mathbb{E}[{\bf a}_j^{\intercal} {\bf x}_t]}{\sqrt{{\rm Var}[{\bf a}_j^{\intercal} {\bf x}_t]}}\right)=Q\left(\frac{b_j-{\bf a}_j^{\intercal}{\bm \mu}}{\sqrt{{\bf a}_j^{\intercal}\Sigma {\bf a}_j}}\right).    
\end{align} 
For the bivariate normal distribution, the pdf of the vector $[X,Y]'$ (where $X={\bf a}_j^{\intercal} {\bf x}_t$ and $Y={\bf a}_k^{\intercal} {\bf x}_t$) satisfies

\begin{align}
{\frac {1}{2\pi \sigma _{X}\sigma _{Y}{\sqrt {1-\rho ^{2}}}}}\exp \left(-{\frac {1}{2(1-\rho ^{2})}}\left[\left({\frac {x-\mu _{X}}{\sigma _{X}}}\right)^{2}\right.\right.\nonumber\\
\left.\left.-2\rho \left({\frac {x-\mu _{X}}{\sigma _{X}}}\right)\left({\frac {y-\mu _{Y}}{\sigma _{Y}}}\right)+\left({\frac {y-\mu _{Y}}{\sigma _{Y}}}\right)^{2}\right]\right),\nonumber    
\end{align}
where the means satisfy $\mu_X={\bf a}_j^{\intercal}{\bm \mu}$ and $\mu_Y={\bf a}_k^{\intercal}{\bm \mu}$, the standard derivations satisfy $\sigma_X=\sqrt{{\bf a}_j^{\intercal}\Sigma {\bf a}_j}$ and $\sigma_Y=\sqrt{{\bf a}_k^{\intercal}\Sigma {\bf a}_k}$, and the correlation is 
\begin{align}
\rho=\frac{\mathbb{E}[XY]-\mu_X\mu_Y}{\sigma_X\sigma_Y}
=\frac{{\bf a}_j^{\intercal}(\Sigma+\mu \mu^{\intercal}){\bf a}_k-\mu_X\mu_Y}{\sigma_X\sigma_Y}.\nonumber    
\end{align}

For a given pair $(j,\,k)$ such that $j\neq k$ we let
\begin{align}
\label{q_p_r}
q_{jk}=\mathbb{P}({\bf a}_j^{\intercal} {\bf x}_t\geq b_j,\,{\bf a}_k^{\intercal} {\bf x}_t\geq b_k),\nonumber\\
p_{jk}=\mathbb{P}({\bf a}_j^{\intercal} {\bf x}_t\geq b_j,\,{\bf a}_k^{\intercal} {\bf x}_t< b_k),\nonumber\\
r_{jk}=\mathbb{P}({\bf a}_j^{\intercal} {\bf x}_t< b_j,\,{\bf a}_k^{\intercal} {\bf x}_t< b_k).
\end{align}
Combining the relations (\ref{MSE_hyperplanes_samev2}), (\ref{q_j_variable}) and (\ref{q_p_r}), we obtain that 
\begin{align}
\mathbb{E}[d(f,\hat{f})]
&=\mathbb{E}\Big[\Big(\sum\limits_{j=J}^{2n}{{c_j}\mathbbm{1}_{{\bf a}_j^{\intercal} {\bf x}_t\geq b_j}-{d_j}\mathbbm{1}_{{\bf a}_j^{\intercal} {\bf x}_t< b_j}}\Big)\nonumber\\
&\cdot\Big(\sum\limits_{k=J}^{2n}{{c_k}\mathbbm{1}_{{\bf a}_k^{\intercal} {\bf x}_t\geq b_k}-{d_k}\mathbbm{1}_{{\bf a}_k^{\intercal} {\bf x}_t< b_k}}\Big)\Big]\nonumber\\
&=\sum\limits_{j=J}^{2n}\sum\limits_{k=J}^{2n} c_j c_k \mathbb{P}({\bf a}_j^{\intercal} {\bf x}_t\geq b_j,\,{\bf a}_k^{\intercal} {\bf x}_t\geq b_k)\nonumber\\
&+\sum\limits_{j=J}^{2n}\sum\limits_{k=J,k\neq j}^{2n} c_j d_k \mathbb{P}({\bf a}_j^{\intercal} {\bf x}_t\geq b_j,\,{\bf a}_k^{\intercal} {\bf x}_t< b_k)\nonumber\\
&+\sum\limits_{j=J}^{2n}\sum\limits_{k=J,k\neq j}^{2n} d_j c_k \mathbb{P}({\bf a}_j^{\intercal} {\bf x}_t< b_j,\,{\bf a}_k^{\intercal} {\bf x}_t\geq b_k)\nonumber\\
&+\sum\limits_{j=J}^{2n}\sum\limits_{k=J}^{2n} d_j d_k \mathbb{P}({\bf a}_j^{\intercal} {\bf x}_t< b_j,\,{\bf a}_k^{\intercal} {\bf x}_t< b_k).\nonumber
\end{align}
We can rewrite $\mathbb{E}[d(f,\hat{f})]$ as
\begin{align}
\mathbb{E}[d(f,\hat{f})]
&=\sum\limits_{j=J}^{2n}\sum\limits_{k=J,k\neq j}^{2n} c_j c_k \mathbb{P}({\bf a}_j^{\intercal} {\bf x}_t\geq b_j,\,{\bf a}_k^{\intercal} {\bf x}_t\geq b_k)\nonumber\\
&+\sum\limits_{j=J}^{2n}c^2_j  \mathbb{P}({\bf a}_j^{\intercal} {\bf x}_t\geq b_j)\nonumber\\
&+2\sum\limits_{j=J}^{2n}\sum\limits_{k=J,k\neq j}^{2n} c_j d_k \mathbb{P}({\bf a}_j^{\intercal} {\bf x}_t\geq b_j,\,{\bf a}_k^{\intercal} {\bf x}_t< b_k)\nonumber\\
&+\sum\limits_{j=J}^{2n}\sum\limits_{k=J,k\neq j}^{2n} d_j d_k \mathbb{P}({\bf a}_j^{\intercal} {\bf x}_t< b_j,\,{\bf a}_k^{\intercal} {\bf x}_t< b_k)\nonumber\\
&+\sum\limits_{j=J}^{2n}d^2_j  \mathbb{P}({\bf a}_j^{\intercal} {\bf x}_t< b_j)\nonumber\\
&\overset{(\ref{q_p_r})}{=}\sum\limits_{j=J}^{2n}\sum\limits_{k=J,k\neq j}^{2n} c_j c_k q_{jk}+\sum\limits_{j=J}^{2n}c^2_j  q_j
+2\sum\limits_{j=J}^{2n}\sum\limits_{k=J,k\neq j}^{2n} c_j d_k p_{jk}\nonumber\\
&+\sum\limits_{j=J}^{2n}\sum\limits_{k=J,k\neq j}^{2n} d_j d_k r_{jk}+\sum\limits_{j=J}^{2n}d^2_j  (1-q_j)\nonumber
\end{align}
\begin{align}
\label{LDA_MSE_error_bound}
&\leq \sum\limits_{j=J}^{2n}c_j q_j\sum\limits_{k=J}^{2n}  c_k +2\sum\limits_{j=J}^{2n}c_j q_j\sum\limits_{k=J,k\neq j}^{2n} d_k\nonumber\\ 
&+\sum\limits_{j=J}^{2n} d_j\sum\limits_{k=J}^{2n}  d_k (1-q_k),
\end{align}
where the last inequality follows from (\ref{q_p_r}) where we observe that $q_j=q_{jk}+p_{jk}$ and $1-q_k=p_{jk}+r_{jk}$ for any $k\neq j$.

In general, from (\ref{LDA_MSE_error_bound}) and using the definitions of $q_{jk}$, $p_{jk}$, and $r_{jk}$, it is not straightforward to determine the set of hyperplane parameters $\{{\bf a}_j,\,b_j\}$. The MSE depends on how we jointly determine ${\bf a}_j,\,b_j,\,n,\,J$. Note also that $\{c_j\}_{j=1}^J$ and $\{q_j\}_{j=1}^J$ depend on the blocklength $n$. We emphasize that it is not straightforward to derive a necessary condition for achieving the desired MSE metric.  
On the other hand, we observe that when $c_J$ is high (when the separation between two regions needs to be large) or $\epsilon$ is small, then the required number of hyperplanes, i.e., $J$, is high.

If we assume that $d_j=-c_j$, we can have the following sufficient condition to meet the MSE criterion:
\begin{align}
\sum\limits_{j=J}^{2n}c_j q_j\sum\limits_{k=J}^{2n}  c_k &-2\sum\limits_{j=J}^{2n}c_j q_j\sum\limits_{k=J,k\neq j}^{2n} c_k\nonumber\\ &+\sum\limits_{j=J}^{2n} c_j\sum\limits_{k=J}^{2n}  c_k (1-q_k)\leq \epsilon.\nonumber
\end{align}
Rearranging the above relation we obtain
\begin{align}
&-\sum\limits_{j=J}^{2n}\sum\limits_{k=J}^{2n}c_j c_k q_j + 2\sum\limits_{j=J}^{2n}c_j^2 q_j+\sum\limits_{j=J}^{2n} \sum\limits_{k=J}^{2n} c_j c_k (1-q_j)\nonumber\\
&=2\sum\limits_{j=J}^{2n}c_j^2 q_j+\sum\limits_{j=J}^{2n}c_j (1-2q_j) \sum\limits_{k=J}^{2n} c_k \nonumber\\
&=\sum\limits_{j=J}^{2n} \left(2c_j^2 q_j+c_j (1-2q_j) \sum\limits_{k=J}^{2n} c_k  \right) \leq \epsilon.\nonumber    
\end{align}
Hence, a sufficient condition to ensure the desired distortion level is given as for $j=J,\dots, 2n$
\begin{align}
\left(2c_j^2 -2c_j\sum\limits_{k=J}^{2n} c_k\right)q_j+c_j \sum\limits_{k=J}^{2n} c_k \leq \frac{\epsilon}{2n-J+1},
\end{align}
which is equivalent to the condition for any $j\in\{J,\dots, 2n\}$:
\begin{align}
\label{sufficient_condition_q_j}
q_j
\leq \frac{1}{2c_j^2 -2c_j\sum\limits_{k=J}^{2n} c_k} \cdot\Big(\frac{\epsilon}{2n-J+1}-c_j \sum\limits_{k=J}^{2n} c_k\Big),
\end{align}
where recall that from (\ref{c_eqn})  $c_j=\frac{\phi(b_j)}{1-\Phi(b_j)}$. As $b_j$ increases we expect $c_j$ to increase and $q_j$ to decrease. However, we cannot increase $b_j$ arbitrarily because $$\frac{\epsilon}{2n-J+1}-c_j \sum\limits_{k=J}^{2n} c_k>0.$$  
Rearranging this inequality and summing up both sides of the equation from $k=J$ to $k=2n$, for the sufficient condition in (\ref{sufficient_condition_q_j}) to hold it is required that $$\sum\limits_{k=J}^{2n} c_k \leq \sqrt{\epsilon}.$$

\end{document}